\documentclass[12pt]{article}

\usepackage{epstopdf,amsfonts,amsmath}
\usepackage{graphicx}
\usepackage{tikz}

\DeclareGraphicsExtensions{.eps}


\newcommand\encadremath[1]{\vbox{\hrule\hbox{\vrule\kern8pt
\vbox{\kern8pt \hbox{$\displaystyle #1$}\kern8pt}
\kern8pt\vrule}\hrule}}
\def\enca#1{\vbox{\hrule\hbox{
\vrule\kern8pt\vbox{\kern8pt \hbox{$\displaystyle #1$}
\kern8pt} \kern8pt\vrule}\hrule}}

\newcommand\figureframex[3]{
\begin{figure}[bth]
\hrule\hbox{\vrule\kern8pt
\vbox{\kern8pt \vbox{
\begin{center}
{\mbox{\epsfxsize=#1.truecm\epsfbox{#2}}}
\end{center}
\caption{#3}
}\kern8pt}
\kern8pt\vrule}\hrule
\end{figure}
}
\newcommand\figureframey[3]{
\begin{figure}[bth]
\hrule\hbox{\vrule\kern8pt
\vbox{\kern8pt \vbox{
\begin{center}
{\mbox{\epsfysize=#1.truecm\epsfbox{#2}}}
\end{center}
\caption{#3}
}\kern8pt}
\kern8pt\vrule}\hrule
\end{figure}
}

\makeatletter
\@addtoreset{equation}{section}
\makeatother
\newtheorem{theorem}{Theorem}[section]
\newtheorem{conjecture}{Conjecture}[section]
\newtheorem{remark}{Remark}[section]
\newtheorem{proposition}{Proposition}[section]
\newtheorem{lemma}{Lemma}[section]
\newtheorem{corollary}{Corollary}[section]
\newtheorem{definition}{Definition}[section]
\newtheorem{example}{Example}[section]

\newcommand{\diff}{\operatorname{d}}

\newcommand{\Det}{\operatorname{Det}}

\newcommand{\Symb}{\operatorname{Symb}}



%
%

%

\newcommand{\Res}{\mathop{\,\rm Res\,}}
\textwidth 155mm
\textheight 235mm
\topmargin 0pt
\oddsidemargin 5mm
\headheight 0pt
\headsep 0pt
\topskip 9mm
%

\usepackage[pdftex]{hyperref}
\usepackage{stmaryrd}
\hypersetup{colorlinks,urlcolor=magenta,citecolor=red,linkcolor=blue,filecolor=black}

\begin{document}

\sloppy

\hfill IPhT-T17/180 , CRM-3365

\addtolength{\baselineskip}{0.20\baselineskip}
\begin{center}
\vspace{1cm}

{\Large \bf {Integrability of $\mathcal W({\mathfrak{sl}_d})$-symmetric Toda conformal field theories I : Quantum geometry
}}

\vspace{1cm}

{Rapha\"el Belliard}$^1$,
{Bertrand Eynard}$^{2,3}$,

\vspace{5mm}
$^1$\  Deutsches Elektronen-Synchrotron Theorie Group,
\\
Notkestrasse 85, 22607 Hamburg, Germany
\\
$^2$\ Institut de Physique Th\'eorique, Universit\'e Paris Saclay, 
\\
CEA, CNRS, F-91191 Gif-sur-Yvette, France
\\
$^3$\ Centre de Recherches Math\'ematiques, Universit\'e de Montr\'eal,
\\
2920, Chemin de la tour, 5357 Montr\'eal, Canada.
\vspace{5mm}
\end{center}

\begin{center}
{\bf Abstract}

In this article which is the first of a series of three, we consider $\mathcal W({\mathfrak{sl}_d})$-symmetric conformal field theory in topological regimes for a generic value of the background charge, where $\mathcal W({\mathfrak{sl}_d})$ is the W-algebra associated to the affine Lie algebra $\widehat{\mathfrak{sl}_d}$, whose vertex operator algebra is included to that of the affine Lie algebra $\widehat{\mathfrak g}_1$ at level 1. In such regimes, the theory admits a free field representation. We show that the generalized Ward identities assumed to be satisfied by chiral conformal blocks with current insertions can be solved perturbatively in topological regimes. This resolution uses a generalization of the topological recursion to non-commutative, or quantum, spectral curves. In turn, special geometry arguments yields a conjecture for the perturbative reconstruction of a particular chiral block.

\end{center}
\begin{quote}

\end{quote}


\tableofcontents


\section{Lightning review of conformal field theory}

Conformal field theories in two dimensions have appeared in the physics literature as powerful tools to study numerous systems, from critical (possibly quantum) statistical models in two dimensions (in some thermodynamic limit) to the world-sheet conformal symmetry of string theories \cite{BPZ}. They have the particularity to exhibit infinite dimensional conformal algebras of symmetries, namely extensions of the Virasoro algebra (that are not necessarily Lie algebras as we shall see). It was in turn argued in some cases \cite{BL} that there exists an underlying structure of quantum integrable system with commuting transfer matrices and such.

Mathematically speaking, all these constructions assume the existence of a certain set of functions called $M$-points correlation functions, for some integer $M\in\mathbb N^*$, defined on $M$ copies of a given connected Riemann surface $\Sigma$, and denoted formally as 

\begin{eqnarray}
\left\langle\, \prod_{j=1}^M \Phi_j(p_j)\, \right \rangle
\end{eqnarray}

for distinct generic points $p_1,\dots, p_M\in\Sigma$ called the punctures. They are defined as solutions to linear differential equations called Ward identities and are assumed to be smooth on the generic locus of $\Sigma^M$ and to satisfy a set of axioms, written here for our purpose.

\begin{itemize}

\item \emph{Axiom 1 : Holomorphic factorization.}
For any $M\in\mathbb N^*$, much like a Hodge decomposition, or a separation of variables, there exists a sequence of objects called conformal blocks $\{\mathcal F_\gamma\}_{\gamma\in B_{\mathcal G}}$ such that 

\begin{eqnarray}
\left\langle\, \prod_{j=1}^M \Phi_j(p_j)\,\right\rangle = \sum_{\gamma,\gamma'\in B_{\mathcal G}} C^{\gamma,\gamma'}\mathcal F_\gamma(\bold z)\, \mathcal F_{\gamma'}(\bold{\overline z})
\end{eqnarray}

where we introduced the vector notation $\bold z =(z_1,\dots, z_M)$. $B_{\mathcal G}$ is the set of labels parametrizing this basis of conformal blocks and it contains in particular the data of $\mathcal G$, a channel, namely a certain choice of unicellular trivalent graph on the considered Riemann surface satisfying $\partial\mathcal G=\{z_1,\dots, z_M\}$ and $\pi_1(\Sigma-\mathcal G,o)=0$ with respect to a chosen reference point $o\in\Sigma$. This axiom allows to reduce the problem to its holomorphic (often called chiral) and anti-holomorphic (anti-chiral) parts.

\end{itemize}

In Physics, one wishes the correlation functions that are reconstructed in this way to be modular invariant. It is known that in the case of $\mathfrak g=\mathfrak{sl}_2$, elements of a basis of conformal blocks are  labeled by simply laced (ADE) Dynkin diagrams, in turn these diagrams classify admissible modular invariant correlation functions. Unfortunately, such a statement does not exist yet for higher rank Lie algebras.

In this article we will be interested solely in studying chiral Toda conformal field theory, or holomorphic conformal blocks of the W-algebra $\mathcal W({\mathfrak{sl}_d})$, denoted generically $\mathcal F_\gamma(\bold z)=\left\langle \prod_{j=1}^M V_{\alpha_j}(z_j) \right\rangle$, where we choose the vertex operators $V_{\alpha_j}$'s are to be primary fields of the W-algebra algebra we will soon introduce. Namely, we will assume operator product expansions with the W-generators of the form

\begin{eqnarray}
\bold W^{(k)}(x)V_{\alpha_j}(z_j) & \underset{x\rightarrow z_j}{=} & \frac{q^k(\alpha_j)}{(x-z_j)^k}V_{\alpha_j}(z_j)\ +\ \mathcal O\left(\frac 1{(x-z_j)^{k-1}} \right)
\end{eqnarray}

with Weyl-invariant leading coefficient $q^k(\alpha_j)$. To compute such amplitudes (and as is customary, in quantum mechanics, to mimic the interaction of an observer with the system) we introduce a probe, a so-called chiral spin-one current $\bold J(\widetilde{x})$ (understand locally holomorphic) valued in the dual $\mathfrak{g}^*$ of the Lie algebra and defined for points $\widetilde{x}\in\widetilde{\Sigma}$ in the universal cover $\widetilde{\Sigma}\longrightarrow\Sigma$. It can be seen as being multi-valued on $\Sigma$ and generically having singularities at points of the universal covering that project to any of the $z_j$'s. This relates to the fact that we will assume the algebra of symmetry of be smaller than that generated by the current.

In this quantum theory, the vertex operators are interpreted as the matter content with which the current interacts. This interaction is such that to configurations of points on the Riemann surface, where the operators and currents are inserted, are associated  correlations, describing the entanglement of the particles.

The following axioms are analytic and algebraic requirements these correlations should satisfy as functions of these configurations of points.

Let us fix once and for all the Lie algebra we consider to be $\mathfrak g\underset{def}{=}\mathfrak{sl}_d$ and choose a set of simple roots $\mathfrak R_0\underset{def}{=}\{\mathfrak e_1,\dots,\mathfrak e_{d-1}\}\in\mathfrak h^*$, denoting generically a Cartan sub-algebra by $\mathfrak h\subset\mathfrak g$. We will denote by $\mathfrak R_+$ (resp. $\mathfrak R_-$) the corresponding set of positive (resp. negative) roots. Introducing the \textit{minimal} invariant bilinear form $(\cdot,\cdot)$ on $\mathfrak g^*\times \mathfrak g^*$ (giving length 2 to simple roots), let us consider the algebra generated by a central element $K$ together with the harmonics $(\bold J^{(n)})_{n\in\mathbb Z}$, or modes, obtained by decomposing the chiral current around any generic point $\widetilde x_0 \in \widetilde \Sigma$ (with local coordinate $t=x-x_0$) as

\begin{eqnarray}
\bold J(\widetilde x) \underset{def}{=}\sum_{n\in\mathbb Z}\frac{\bold J^{(n)}(\widetilde x_0)}{(x-x_0)^{n+1}}
\end{eqnarray}

\begin{remark}
We will drop the explicit writing of the dependence of the modes in the generic point $x_0$ when no confusion is possible. Modes can only be compared when evaluated at the same point.
\end{remark}

These generators satisfy the commutation relations

\begin{eqnarray}
[\bold J^{(n)},\bold J^{(m)}] =[\bold J,\bold J]^{(n+m)}+n(\bold J^{(n)},\bold J^{(m)})\delta_{n+m,0} K
\end{eqnarray}

where the $\bold J$ symbols in $[\bold J,\bold J]$ are generically evaluated at different Lie algebra elements and therefore have non-trivial Lie bracket. The so-called affine algebra at level $\kappa\in\mathbb C$ denoted $\widehat{\mathfrak g}_\kappa$ is then defined as the Lie algebra $\widehat{\mathfrak g}_\kappa \underset{def}{=}\widehat{\mathfrak g}\, \big/(\kappa-K)$, where $\widehat{\mathfrak g}$, called the generic affine algebra associated to $\mathfrak g$, is defined as the central extension of vector spaces

\begin{eqnarray}
0\longrightarrow \mathbb C K\longrightarrow \widehat{\mathfrak g}_\kappa \longrightarrow \mathcal L(\mathfrak g)\oplus\mathbb C \partial \longrightarrow 0
\end{eqnarray}

where $\mathcal L(\mathfrak g)$ is the \textit{loop algebra} of $\mathfrak g$ denoted $\mathcal L(\mathfrak g)\underset{def}{=}\mathfrak g((t))$ (endowed with the natural Lie algebra structure coming from $\mathfrak g$) and the extra generator $\partial$ is defined to satisfy

\begin{eqnarray}
[\partial, M]=\frac{\text{d}}{\text{dt}}M,\quad (\text{and thus}\,\, [\partial, K]=0)
\end{eqnarray}

for any $M\in\mathcal L(\mathfrak g)=\mathfrak g((t))$. $K$ is a central element assumed to act trivially as multiplication by $\kappa$ (using the fact that the short exact sequence splits, this is equivalent to focusing on diagonalizable $\widehat{\mathfrak g}$-module with finite weight spaces \cite{Her} and restricting ourselves to the highest weight representations they define). The levels $\kappa$ are in general not constrained but they are for example in the case where the conformal field theory can be extended to a certain three dimension topological field theory named Chern-Simons theory on a given 3-manifold $M$ whose boundary $\partial M=\Sigma$ is a Riemann surface. The levels are then often required to make two copies of Chern-Simons on $M$ equivalent if they yield the same conformal field theory on $\partial M$. They are then parametrized by maps $M_\Sigma M\longrightarrow G$ considered up to homotopy, where $M_\Sigma M$ denotes gluing of the copies of $M$ along their identical boundary $\Sigma$ with matching of orientations. In the case where $G=SU(2)$, $\Sigma$ is the Riemann sphere and $M\subset\mathbb R^3$ is the unit ball, since gluing in this case yields a $3$-sphere, the levels of the affine algebras of interest are then parametrized by the third homotopy group given by $\pi_3(SU(2))\simeq \mathbb Z$.  Constraints can also arise from the representation theory of the affine algebra, indeed, when the levels under consideration are positive integers, $\widehat{\mathfrak g}$ admits unitary highest weight representations whose highest weights are dominant integral (quantization condition).

The vertex operator algebra corresponding to the W-algebra $\mathcal W({\mathfrak{sl}_d})$ will be included in the one of the affine Lie algebra at level 1 denoted $\widehat{\mathfrak g}_1$. This latter vertex operator algebra then appears as a module over the former, allowing to consider the following

\begin{definition}{Insertions of currents}\\
By insertions of currents into chiral correlation functions we mean that we consider an infinite yet countable set of additional $(\mathfrak g^*)^{\otimes n}$-valued functions of interest denoted

\begin{eqnarray}
 \big\langle \big\langle \bold J(\widetilde x_1)\cdots \bold J(\widetilde x_n)\big\rangle\big\rangle\underset{def}{=}\frac{\left\langle\, \bold J(\widetilde x_1)\cdots \bold J(\widetilde x_n)\, \prod_{j=1}^M V_{\alpha_j}(z_j)\, \right\rangle}{\left\langle\,  \prod_{j=1}^M V_{\alpha_j}(z_j)\, \right\rangle}
\end{eqnarray}

\end{definition}

A set of function that are required to be compatible with the following

\pagebreak 

\begin{itemize}

\item \emph{Axiom 2 : Operator product expansion.}

Keeping the notation $(\cdot,\cdot)$ for the form on $\mathfrak g\times\mathfrak g$ dual to minimal invariant bilinear form on $\mathfrak g^*\times\mathfrak g^*$,

\begin{eqnarray}
\bold J(\widetilde x\cdot E)\bold J(\widetilde y\cdot F) &\underset{x\sim y}{=} &-\frac{(E,F)}{(x-y)^2}\, +\, \frac{\bold J(\widetilde y\cdot [E,F])}{x-y} +\ \mathcal O(1)
\end{eqnarray}

Similarly, a way to realize the type of vertex operators under consideration is to assume, for any puncture index $j\in\{1,\dots,M\}$, the existence of a Cartan sub-algebra $\mathfrak h_j \subset \mathfrak g$ and an element $\alpha_j\in\mathfrak h_j^*$ such that

\begin{eqnarray}
\bold J(\widetilde x\cdot E)\, V_{\alpha_j}(z_j)\underset{x\sim z_j}{=} - \frac{\alpha_j(E)}{x-z_j}V_{\alpha_j}(z_j)\ +\ \mathcal O(1)
\end{eqnarray}

for $\widetilde x,\widetilde y \in\widetilde\Sigma$ and some Cartan elements $E,F\in\mathfrak h_j$. We also introduced the linear  notation $\bold J(\widetilde x \cdot E)\underset{def}{=}\bold J(\widetilde x)(E)$ to relate with our notations in the study of Fuchsian differential systems for the evaluation \cite{BEM1}, \cite{BER1}. We denote the normal ordering operation hidden in the $\mathcal O(1)$ symbols as $:A(\widetilde x)B(\widetilde y) \underset{x=y}{:}$ defined as the next to singular term when the base-points of $\widetilde x,\widetilde y\in\widetilde\Sigma$ come together (but are not necessarily such that $\widetilde x=\widetilde y$).

The elements $\alpha_j\in\mathfrak h_j^*$  appearing in the last asymptotic equality are constrained such that vertex operators $V_{\alpha_j}$'s have the right Weyl invariant leading coefficients $q^k(\alpha_j)$ with W-algebra generators $\bold W^{(k)}$'s when using the definitions of next section. Because we will define the W-algebra by its free field realization $\mathcal W({\mathfrak{sl}}_d)\subset\overline{\mathcal U}(\widehat{\mathfrak h})$, this amounts to view a module over a vertex operator algebra as a module over one of its sub-vertex operator algebras.

The presence of simple poles moreover mean that we consider only regular singularities. Irregular singularities in the $\mathfrak g=\mathfrak{sl}_2(\mathbb C)$ case of Liouville conformal field theory were studied in \cite{GT}.
\end{itemize}

\begin{definition}{Background charge}\\
Define the background charge as the number denoted $Q\underset{def}{=}b+b^{-1}$ and parametrized by the non-zero complex number $b\in\mathbb C^*$.
\end{definition}

These asymptotic relations are to be understood as holding when inserted into correlation functions, that is to say that they are meromorphic conditions on the functions we denoted $\langle\langle\bold J\cdot\dots\cdot\bold J\rangle\rangle$. They are strong requirements as $\bold J$ contains for example both data of the Lie bracket and the minimal invariant bilinear form on $\mathfrak g$.

Recall that the Virasoro algebra $Vir$ is the infinite dimensional Lie algebra that generates the conformal transformations of the complex plane. It is defined as the central extension of vector spaces

\begin{eqnarray}
0\longrightarrow \mathbb C c \longrightarrow Vir \longrightarrow \text{Der}_{\mathbb C} \longrightarrow 0
\end{eqnarray}

where we introduced the Lie algebra $\text{Der}_{\mathbb C}$ of holomorphic derivations of the field of Laurent series on the complex plane as well as the central element acting  the scalar $c\underset{def}{=}d-1+12Q^2$ ($c$ stands for Casimir) called the central charge. $Vir$ is generated by $c$ together with the elements $(L_n)_{n\in\mathbb Z}$ satisfying the famous commutation relations

\begin{eqnarray}
[L_n,L_m]=(n-m)L_{n+m}+\frac {c}{12}n(n^2-1)\delta_{n+m,0},\quad (n,m)\in\mathbb Z^2
\end{eqnarray}

where for any $n\in\mathbb Z$, $L_n$ generates the one-parameter family of local  conformal transformations $(z\longmapsto t\, z^{n+1})_{t\in\mathbb C}$ e.g. $L_0$ is the dilation operator. It goes to the Witt algebra in the zero central charge limit $c\longrightarrow 0$. 
By essence of conformal field theory, generators of the Virasoro algebra can be gathered into a meromorphic stress-energy tensor that belongs to the vertex operator algebra under consideration (anticipating on the next axiom by denoting the variable by $x$ and not one of its pre-images $\widetilde x$ by the universal covering map). It is such that it can be expanded around a basepoint $x_0\in\Sigma$ as

\begin{eqnarray}
T(x)\underset{x\sim x_0}{=}\sum_{n\in\mathbb Z}\frac{L_n(x_0)}{(x-x_0)^{n+2}},
\end{eqnarray}

and the Virasoro canonical commutation relations are translated in the following \emph{operator product expansion}

\begin{eqnarray}
T(x')T(x)\underset{x'\sim x}{=}\frac{c/2}{(x'-x)^4}\, +\, \frac{2\, T(x)}{(x'-x)^2}\, +\, \frac{\partial T(x)}{x'-x}\, +\, \mathcal O(1)
\end{eqnarray}

Similarly, the operator product expansion of the stress-energy tensor with the chiral current is defined to be

\begin{eqnarray}
T(x)\bold J(\widetilde y) \underset{x\sim y}{=}\frac{\bold Q}{(x-y)^3}+\frac{\bold J(\widetilde y)}{(x-y)^2}+\frac{\partial\bold J(\widetilde y)}{x-y}+\mathcal O(1)
\end{eqnarray}

where $\bold Q\underset{def}{=} Q\rho$, $\rho\underset{def}{=}\frac 12\sum_{\mathfrak r \in \mathfrak R_+} \mathfrak r$ being the Weyl vector, to be again understood as identities holding when inserted into correlation functions. The coefficient $1$ in front of the second order pole in the last expression tells us that the current $\bold J$ has spin (conformal weight) $1$.

\begin{remark}
One might be afraid that such a decomposition for the stress-energy tensor would create singularities of infinite order in some operator product expansion appearing in the theory but a requirement of the vertex operator algebra formalism is that any admissible field $V_\alpha$ should be annihilated by all high enough modes of $T$, see \cite{Bor} for details. In particular, define an admissible ground state as a vector $|0\rangle\in\mathcal A$ in the considered representation satisfying the so-called \textit{Virasoro constraints}

\begin{eqnarray}
\forall n\geq -1,\qquad L_n|0\rangle=0
\end{eqnarray}

In particular, if we were to assume that $L_n^\dagger=L_{-n}$, then the Virasoro constraints would yield that the expected value of the stress-energy tensor vanishes

\begin{eqnarray}
\langle 0|T(x)|0\rangle=0
\end{eqnarray}

namely that we have conformal symmetry in this ground state at the quantum level. We will not however be assuming the existence of such a ground state in our study.
\end{remark}

The next axiom is at the heart of the method we adopt to study conformal field theories. As was mentioned in the introduction, a path integral formulation of the problem with a Lagrangian allows for the derivation of Schwinger-Dyson equations. Their counterparts in this non-perturbative definition of conformal field theories are the following conformal Ward identities. 

\begin{itemize}

\item \emph{Axiom 3 : Generalized conformal Ward identities.}

For any generic $\widetilde{x}_1,\cdots,\widetilde{x}_n\in\widetilde\Sigma$, $\big\langle\big\langle T(x)\bold J(\widetilde x_1)\cdots\bold J(\widetilde x_n)\big\rangle\big\rangle$ is a holomorphic function of the variable $x\in\Sigma-\{z_1,\dots,z_M,x_1,\dots,x_n\}$ with meromorphic singularities at $x=x_i$ and $x=z_j$ whose behaviors are prescribed by the operator product expansions.

\end{itemize}

This is again an axiom prescribing some analytic conditions for the functions of interest. We will be applying similar ideas for the generating series of generators of $\mathcal W({\mathfrak{sl}_d})$ for which the conformal Ward identities together with the operator products expansions yield the so-called loop equations.

The two last axioms deal with how one can reconstruct the full theory from its chiral and anti-chiral parts. We will not be needing them in the context of this work but we still state them for completeness.

\begin{itemize}

\item \emph{Axiom 4 : Single-valuedness.}

The $M$-points correlation functions have no monodromy around cycles in the moduli space of configurations of $M$ distinct points on the Riemann surface $\Sigma$.

\item \emph{Axiom 5 : Fusion and crossing symmetries.}

The decomposition of the real correlation functions in terms of the conformal blocks requires in particular a choice of channel, a unicellular trivalent graph, on the base Riemann surface $\Sigma$ and different choices of such channels should lead to the same correlation function after reconstruction. This is often referred to as the \textit{associativity} of the operator product expansions.

\end{itemize}

There is no general proof that all these axioms are actually compatible. We will therefore proceed by necessary condition, assuming these axioms to be compatible and satisfied, to define the algebra $\mathcal W({\mathfrak{sl}_d})$ and the so-called insertions of W-generators in these chiral correlation functions with currents. In turn, this will yield $ \mathcal W({\mathfrak{sl}_d})$-symmetric, or Toda, generalized conformal Ward identities. In the second part we will define the associated quantum geometry through the quantum spectral curve. This will turn out to be the deformed initial data needed to run the topological recursion of \cite{EO} in this context and we will show that it constructs perturbatively solutions to the $\mathcal W({\mathfrak{sl}_d})$-symmetric conformal Ward identities. This work is a direct generalization of \cite{CER} where the three points function of Liouville theory on the sphere was checked to be computed by this method to first orders. Let us stress furthermore that we are consider conformal blocks at generic values of their parameters and that in turn, the quantum branch points to be defined will be simple. This implies in particular that the formula for the topological recursion will not need to encompass higher ramification profiles as is for instance taken into account in \cite{BouE}. In the sequel to this paper we shall exhibit an explicit realization of this framework using $\beta$-deformed two-matrix models and extend the formalism to conformal field theories on higher genus surfaces.

\section{W-algebras and associated conformal field theories}

\subsection{From Virasoro to W-algebras}

A conformal field theory \cite{T} is a quantum field theory defined on a Riemann surface $\Sigma$ and endowed with an action of the product $\mathcal A \times \mathcal A'$ of two extensions $Vir\subset\mathcal \mathcal A$, $Vir\subset \mathcal A'$ of the Virasoro algebra (they need not be the same).

Let us stress at this point that these two extensions $\mathcal A$ and $\mathcal A'$ act respectively upon the holomorphic and the anti-holomorphic dependence of the observables defined on the Riemann surface. We will here only be interested in the chiral theory, that is in the action of $\mathcal A$ and in the meromorphic properties of the chiral correlation functions.

We will be interested particularly in the extension $Vir\subset\mathcal A\underset{def}{=}\mathcal W({\mathfrak{sl}_d})$ defined from the affine Lie algebra $\widehat{\mathfrak{sl}_d}$ (fix once and for all $\mathfrak g\underset{def}{=}\mathfrak{sl}_d$), using a higher rank generalization of the Sugawara construction \cite{Su68}, namely the quantum Miura transform, and defining generating functions whose modes generate  $\mathcal W({\mathfrak{sl}_d})$. Let us mention that our method is not directly generalizable to general reductive complex Lie algebra as it relies on the explicit expression of the W-generators that the quantum Miura transform yields and that such a definition does not work in the general case where one has to quantize the Poisson algebra underlying the Drinfeld-Sokolov hierarchy associated to the Lie algebra under consideration (equivalent to the quantum Drinfeld-Sokolov reduction) \cite{BS},\cite{FF}.

The idea behind W-algebras is that they allow for a better encoding of some representations of $Vir$. Indeed, there are spaces representing both the W-algebra and the Virasoro algebra that decompose as an infinite direct sums of irreducible representations of $Vir$ but as a finite direct sums of irreducible representations of $ \mathcal W({\mathfrak{sl}_d})$. In particular, the operator product expansions they satisfy should be expressible in terms of these generators only. We will see two different situations in which this is possible but we will not get any further in studying the representation theory of W-algebras and refer the reader the \cite{BS}, \cite{BW}.

The definition of the W-algebra generators involves non-commutative geometry and $\mathcal W({\mathfrak{sl}_d})$ appears as a subalgebra $\mathcal W({\mathfrak{sl}_d})\subset \overline{\mathcal U}\left(\widehat{\mathfrak g}_1\right)$ of a completion of the universal enveloping algebra of the affine  algebra $\widehat{\mathfrak{sl}}_d$ at level 1.

The background charge plays the role of a quantization parameter and noticing that the W-algebra for generic values of $Q$ reduces to a Casimir algebra in the limit $Q\longrightarrow 0$ will allow for the interpretation of $\mathcal W({\mathfrak{sl}_d})$-symmetric conformal field theory as a quantization of a corresponding $\mathfrak{sl}_d$ Fuchsian system \cite{BEM1},\cite{BER1}.

Before giving the precise definitions, let us review a few generalities on W-algebras.

\subsection{Operator product expansions}

We will throughout this text consider the Lie algebra $\mathfrak{sl}_d$ in its fundamental representation $\mathfrak{sl}_d\subset\mathfrak{gl}_d$. Similarly to the case of the Virasoro algebra, introducing the rank $d-1=\mathfrak{r}k\,\mathfrak{sl}_d$,  the soon to be defined generators of $\mathcal W({\mathfrak{sl}_d})$, denoted $\{\bold W_n^{(d_p)}\}^{n\in\mathbb Z}_{1\leq p\leq d-1}$, fit for a given $p\in\{ 1,\dots, d-1\}$, into a generating function defined around a basepoint $x_0\in\Sigma$ by

\begin{eqnarray}
\bold W^{(d_p)}(\widetilde x)=\sum_{n\in\mathbb Z}\frac{\bold W_n^{(d_p)}}{(x-x_0)^{n+d_p}}\quad \text{for}\, p\in\{ 1,\dots,d-1\}
\end{eqnarray}

where the $d_p$'s are integer indices defined as follows :  since $\mathfrak h$ is a commutative Lie algebra, we have an isomorphism $\mathcal U(\mathfrak h^*)\simeq\mathbb C[\mathfrak h]$ and moreover, by a theorem of Chevalley, the subspace of this last ring which is invariant under the action of the Weyl group is actually a polynomial ring 
$\mathbb C[\mathfrak h]^{\mathfrak w}\simeq\mathbb C[\sigma_1,\dots,\sigma_{d-1}]$ where for any index $p\in\{1,\dots,d-1\}$, we then define $d_p\in\mathbb N^*$ as the degree of the invariant polynomial $\sigma_p$. Since we consider $\mathfrak g=\mathfrak{sl}_d$, we have $\sigma_p=p+1$ for any $p\in\{1,\dots,d-1\}$.

We will assume the algebra $\mathcal W({\mathfrak{sl}_d})$ to be an extension of the Virasoro algebra and will define its generators, denoted $\bold W^{(k)}$, for any $k\in\{2,\dots,d\}$.

Following the introduction of \cite{BW}, the corresponding operator product expansions can be presented schematically as

\begin{eqnarray}
\bold W^{(k)}(\widetilde x)\bold W^{(l)}(\widetilde y) &\underset{x\sim y}{=}& \frac{g^{k,l}}{(x-y)^{k+l}} \nonumber\\
&+& \sum_{s=2}^{d-1} f^{k,l}_{(1),s}\frac{\bold W^{(s)}(\widetilde y)+g^{k,l}_s\partial\bold W^{(s)}(\widetilde y)+\dots}{(x-y)^{k+l-s}} \nonumber\\
&+& \sum_{s,t=2}^{d-1} f^{k,l}_{(2),s,t} \frac{:\bold W^{(s)}(\widetilde y)\bold W^{(t)}(\widetilde y):+\dots}{(x-y)^{p+q-s-t}}+\dots \nonumber\\
\end{eqnarray}

We will identify $\bold W^{(2)}\propto T$ as being the stress energy tensor generating the Virasoro algebra and although we will not need it for this study, the rest of these generating functions, $\bold W^{(k)}$ for $k\neq 2$, could be transformed to primary fields $\widetilde{\bold W}^{(k)}$ of the Virasoro algebra, without changing the W-algebra their mode generate, satisfying the operator product expansions

\begin{eqnarray}
T(x)\widetilde{\bold W}^{(k)}(\widetilde y)\underset{x\sim y}{=}k \frac{\widetilde{\bold W}^{(k)}(\widetilde y)}{(x-y)^2}+\frac{\partial \widetilde{\bold W}^{(k)}(\widetilde y)}{x-y}+\mathcal O(1)
\end{eqnarray}

which imply in particular the commutation relations

\begin{eqnarray}
[L_n,\widetilde{\bold W}^{(k)}_m]=[(k-1)n-m]\widetilde{\bold W}^{(k)}_{n+m}
\end{eqnarray}

Let us mention that for $d=3$, the $\mathfrak g=\mathfrak{sl}_3$ case was investigated in \cite{Zam} and the algebra defined by the corresponding operator product expansions is

\begin{eqnarray}
[L_n,\widetilde{\bold W}_m^{(3)}] & = & (2n-m)\widetilde{\bold W}^{(3)}_{n+m}
\end{eqnarray}

where we identified the modes of $\bold W^{(2)}$ with some Virasoro generators and 

\begin{eqnarray}
[\widetilde{\bold W}_m^{(3)},\widetilde{\bold W}_m^{(3)}]& = & (n-m)[\frac{1}{15} (n+m+2)(n+m+3) - \frac{1}{6} (n+2)(m+2)]L_{n+m} \nonumber \\
&& + \frac{c}{3\cdot 5 !} n(n^2-1)(n^2-4) \delta_{n+m,0} + \frac{16}{22+c} (n-m) \Lambda_{n+m}
\end{eqnarray}

where we introduced the symbols

\begin{eqnarray}
\Lambda_n&\underset{def}{=}&\sum_{k\in\mathbb Z} :L_kL_{n-k}:+\frac 15 \nu_nL_n\\
\text{with}\qquad \nu_{2l}&\underset{def}{=}&(1+l)(1-l)\\  
\ \text{and}\qquad  \nu_{2l+1}&\underset{def}{=}& (2+l)(1-l)
\end{eqnarray}

\subsection{W-algebra generators}

Let us consider the generic situation $Q=b+b^{-1}\neq 0$.

\begin{definition}{W-algebra generators}\\
Consider the weights $h_i=\omega_1 - e_1  \dots - e_{i-1}$, $i=1,\dots,d$ of the first fundamental representation of $\mathfrak{sl}_d$. The generating functions of generators of the algebra $\mathcal W({\mathfrak{sl}_d})$ are expressed in any local coordinates through the quantum Miura transform

\begin{eqnarray}
\bold{\widehat{\mathcal E}}=\sum_{k=0}^d (-1)^{k}  \bold W^{(k)} \widehat y^{\, d-k} \underset{def}{=} \ :\left(\widehat y -\bold J_1\right)\cdots\left(\widehat y-\bold J_d\right):
\end{eqnarray}

where $\widehat y\underset{def}{=}Q\partial$ and for any subscript $i=1,\dots,d$ we defined $\bold J_i\underset{def}{=}(h_i,\widetilde{\bold J})$ (\ $\widetilde{\bold J}\in\mathfrak h^*$ is in the conjugacy class of $\bold J$ and is defined up to conjugation by a Weyl group element). The non-commutative prescription for evaluating these products at coinciding points has been used. These local definitions in coordinate patches have to be glued together to define the fields on $\Sigma$ and could in principle result in multi-valued objects.
\end{definition}

This definition is to be understood as the identification of the coefficients of the vertex operator valued polynomial expression in $\widehat y$ obtained by commuting all the derivative symbols to the right.

\begin{example}
\begin{eqnarray}
\bold W^{(1)}(\widetilde x)=\sum_{i=1}^d \bold J_i(\widetilde x)=0
\end{eqnarray}

\begin{eqnarray}
\bold W^{(2)}(\widetilde x)&=&\sum_{1\leq i<j\leq d} :\bold J_i\bold J_j(\widetilde x):-\, Q\sum_{i=2}^d (i-1)\partial\bold J_i(\widetilde x) \\
&=& -\frac 12 :(\bold J,\bold J)(\widetilde x): +\, (\bold Q,\partial\bold J(\widetilde x))
\end{eqnarray}
where we used $\sum_{i=1}^d h_i =0$ and $\bold Q\underset{def}{=}Q\rho$ together with the expression 
\begin{eqnarray}
\rho=\frac 12 \sum_{i=1}^d(n-2i+1)h_i=-\sum_{i=2}^d (i-1) h_i
\end{eqnarray}
of the Weyl vector.
\end{example}

\medskip

\begin{lemma}

For any $k\in\{ 1,\dots d\}$, the $k^{th}$ generator $\bold W^{(k)}$ of $\mathcal W({\mathfrak{sl}_d})$ is equal to
\begin{eqnarray}
\sum_{p=1}^k (-1)^{k-p}Q^{k-p}\sum_{\underset{k\leq i_p}{1\leq i_1<\dots< i_p\leq d}}\sum_{\underset{p+\sum_{l=1}^p q_l=k}{\overset{\forall l\in\llbracket 1,p\rrbracket}{0\leq q_l\leq i_l-i_{l-1}-1}}} \prod_{l=1}^p\binom{i_l-i_{l-1}-1}{q_l}:\partial^{q_1}\left(\bold J_{i_1}\cdots\partial^{q_p}\bold J_{i_p}\right) : \nonumber\\
&
\end{eqnarray}
\end{lemma}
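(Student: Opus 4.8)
The plan is to read off $\bold W^{(k)}$ as a coefficient in the expansion of the Miura operator $\bold{\widehat{\mathcal E}}=(\widehat y-\bold J_1)\cdots(\widehat y-\bold J_d)$ with $\widehat y=Q\partial$, after every derivative symbol has been commuted to the right, and then to prove the resulting combinatorial identity by induction on the number $p$ of current factors. First I would expand the product by choosing, in each of the $d$ factors, either the symbol $\widehat y$ or the current $-\bold J_j$. Such a choice is recorded by the set $\{i_1<\dots<i_p\}$ of positions carrying a current, and it produces the monomial $(-1)^p Q^{d-p}\,\partial^{a_0}\bold J_{i_1}\partial^{a_1}\bold J_{i_2}\cdots\bold J_{i_p}\partial^{a_p}$, with gaps $a_l=i_{l+1}-i_l-1$ ($0\le l\le p$) under the conventions $i_0=0$, $i_{p+1}=d+1$; note $\sum_{l=0}^p a_l=d-p$, consistent with having selected $d-p$ copies of $\widehat y$, and that $a_{l-1}=i_l-i_{l-1}-1$ is precisely the upper index of the binomial in the claim.

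The heart of the argument is the normal-ordering identity (commuting all $\partial$ to the right, denoted $\overset{\mathrm{NO}}{=}$)
\begin{eqnarray}
\partial^{a_0}\bold J_{i_1}\partial^{a_1}\cdots\bold J_{i_p}\partial^{a_p} &\overset{\mathrm{NO}}{=}& \sum_{q_1,\dots,q_p\ge0}\Big(\prod_{l=1}^p\binom{a_{l-1}}{q_l}\Big):\!\partial^{q_1}\big(\bold J_{i_1}\partial^{q_2}(\bold J_{i_2}\cdots\partial^{q_p}\bold J_{i_p})\big)\!:\;\partial^{\,(d-p)-\sum_l q_l} \nonumber
\end{eqnarray}
where the right-hand side carries the \emph{nested} derivative structure in which the inner reading of the claimed expression $:\!\partial^{q_1}(\bold J_{i_1}\cdots\partial^{q_p}\bold J_{i_p})\!:$ must be taken. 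I would prove this by induction on $p$, peeling the leftmost block $\partial^{a_0}\bold J_{i_1}$ off the tail $R=\partial^{a_1}\bold J_{i_2}\cdots\bold J_{i_p}\partial^{a_p}$, to which the inductive hypothesis applies with first gap $a_1$. Writing $R\overset{\mathrm{NO}}{=}\sum(\text{nested field})\,\partial^{b}$ and applying $\partial^{a_0}$ to the multiplication operator $M=\bold J_{i_1}\cdot(\text{nested field})$ via $\partial^{a_0}\!\circ M=\sum_{q_1}\binom{a_0}{q_1}(\partial^{q_1}M)\,\partial^{a_0-q_1}$, the surviving $\partial^{a_0-q_1}$ merges with $\partial^{b}$ into the trailing power. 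The decisive feature is that in the nested form $:\!\partial^{q_1}(\bold J_{i_1}\cdots)\!:$ the outer $\partial^{q_1}$ differentiates the \emph{whole} group headed by $\bold J_{i_1}$, so the derivatives failing to act on $\bold J_{i_1}$ pass straight to the far right \emph{without} cascading onto $\bold J_{i_2},\dots$; this is exactly what yields the local binomial $\binom{a_{l-1}}{q_l}$ rather than an accumulated one, and it rests on associativity of nested normal ordering together with $\partial:\!XY\!:\,=\,:\!(\partial X)Y\!:+:\!X(\partial Y)\!:$.

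It then remains to assemble. Extracting from $\bold{\widehat{\mathcal E}}$ the coefficient of $\widehat y^{\,d-k}=Q^{d-k}\partial^{d-k}$ forces $\sum_l q_l=k-p$; since the maximal value of $\sum_l q_l$ in a configuration is $\sum_{l=1}^p(i_l-i_{l-1}-1)=i_p-p$, an admissible set of $q_l$'s exists precisely when $k\le i_p$, which is the stated index constraint (the rightmost gap $a_p$ stays inert and never carries a binomial). Collecting the prefactor $(-1)^p Q^{d-p}$ against $\partial^{d-k}$ and rewriting $Q^{d-p}\partial^{d-k}=Q^{k-p}\widehat y^{\,d-k}$, the relation $\bold{\widehat{\mathcal E}}=\sum_k(-1)^k\bold W^{(k)}\widehat y^{\,d-k}$ produces the coefficient $(-1)^{k-p}Q^{k-p}$ — exactly the sign and power of $Q$ in the claim. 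Summing over all current-subsets with $i_p\ge k$ and all admissible $q_l$ gives the asserted formula, and the cases $k=1,2$ reproduce the Example.

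The main obstacle, I expect, is conceptual rather than computational: one must pin down the precise meaning of the normal-ordered coinciding-point product so that $\widehat y=Q\partial$ acts as an honest derivation on normal-ordered currents while the mutual contractions of the $\bold J_i$ — whose only singular term is the c-number double pole, since $[h_i,h_j]=0$ kills the simple pole — are exactly what the colons remove. Granting this, the currents may be treated as commuting and no spurious central contributions are generated, so that $\bold W^{(k)}$ is the pure normal-ordered differential polynomial in the claim; this is what legitimizes the clean nested-derivative bookkeeping above. The residual difficulty is then purely one of index care: maintaining the boundary conventions $i_0=0$, $i_{p+1}=d+1$, tracking the inert block $a_p$, and applying the derivation-through-normal-ordering rule consistently at each inductive step.
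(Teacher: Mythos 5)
Your proposal is correct and takes essentially the same route as the paper: the paper's proof likewise expands the normal-ordered Miura product and commutes all $Q\partial$ symbols to the right in the ring of differential operators, using the non-commutative Leibniz formula $(Q\partial)^p f = Q^p\sum_{q=0}^p\binom{p}{q}(\partial^q f)\,\partial^{p-q}$, before identifying coefficients of powers of $\widehat y$. Your induction on the number $p$ of current factors and the gap/index bookkeeping simply spell out in detail what the paper dismisses as ``a straightforward computation.''
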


\begin{proof}{

The proof consists in using non-commutative algebra in the ring of differential operators $\mathcal D_\Sigma$ overs the base curve to commute all $Q\partial$ symbols to the right before identifying the coefficients of the differential operators. To do so we first identify holomorphic functions $f\in\mathcal O_\Sigma$ with the degree $0$ differential operators $f\cdot\in\mathcal D_\Sigma$ of multiplication by $f$ on the left. For any function $f\in\mathcal O_\Sigma$ we then have the commutation relation $[Q\partial,f]=Q(\partial f)$ and it recursively yields the non-commutative version of Leibniz formula

\begin{eqnarray}
(Q\partial)^p f = Q^p\sum_{q=0}^p\binom{p}{q} (\partial^q f)\partial^{p-q}
\end{eqnarray}

where the equality takes place in $\mathcal D_\Sigma$. It is then a straightforward computation to derive the wanted result.
}
\end{proof}

\begin{example}
For $d=2,3$, we get
\begin{eqnarray}
\bullet\quad\bold{\widehat{\mathcal E}}_{d=2}&=&(Q\partial)^2-[\bold J_1+\bold J_2](Q\partial) +:\bold J_1 \bold J_2:-\, Q\left(\partial \bold J_2\right)\\
&=& (Q\partial)^2 - :\bold J_1^2: +\ Q(\partial \bold J_1)\\
&&\nonumber\\
\bullet\quad\bold{\widehat{\mathcal E}}_{d=3}&=&(Q\partial)^3-[\bold J_1+\bold J_2+\bold J_3](Q\partial)^2 \nonumber \\
&+&[:\bold J_1\bold J_2+\bold J_2\bold J_3+\bold J_1\bold J_3:-\, Q\partial \bold J_2-2Q\partial \bold J_3](Q\partial) \nonumber \\ &+&Q^2(\partial^2\bold J_3)+Q:[\bold J_1+\bold J_2]\left(\partial\bold J_3\right):+\ Q:\left(\partial\bold J_2\right)\bold J_3:
\end{eqnarray}
and for any $d\in\mathbb N^*$,

\begin{eqnarray}
\qquad \bold W^{(3)}&=&\sum_{1\leq i<j<k\leq d}:\bold J_i\bold J_j\bold J_k: \nonumber \\
&-& Q\sum_{\underset{3\leq j}{1\leq i<j\leq d}}[(j-i-1):\bold J_i\partial\bold J_j:+\, (i-1)\partial\left(:\bold J_i\bold J_j:\right)] \nonumber \\
&+& Q^2\sum_{i=3}^d\binom{i-1}{2}\partial^2\bold J_i
\end{eqnarray}
\end{example}

For $k\in\{ 2,\dots,d\}$, $\bold W^{(k)}$ therefore involves at most terms of degree $k$ as differential polynomials in $d$ copies of a chosen so-called ``chiral $\mathfrak g^*$-valued spin-one field'' $\bold J(\widetilde x)$ as described before. We require as stated in $Axiom\, 2$ that it satisfies

\begin{eqnarray}
\bold J(\widetilde x\cdot E)\bold J(\widetilde y\cdot F) \underset{x\sim y}{=} -\, \frac{(E,F)}{(x-y)^2}\, +\, \frac{\bold J(\widetilde y \cdot [E,F])}{x-y} +\ \mathcal O(1)
\end{eqnarray}

for Lie algebra elements $E,F\in \mathfrak{sl}_d$, where $(\cdot,\cdot)$ still denotes the corresponding minimal invariant bilinear form.

\subsection{Ward identities}

We now generalize $Axiom\, 3$ to the algebra $\mathcal W({\mathfrak{sl}_d})$ (not necessarily a Lie algebra) defined by the operator coefficients of the expansions of the generators $\bold W^{(k)}$, $k\in\{ 1,\dots d\}$, around a base point $x_0\in\Sigma$.
We then get that the chiral spin-one current $\bold J$ should be chosen such that it satisfies

\begin{definition}
The generalized Ward identities of this $\mathcal W({\mathfrak{sl}_d})$-symmetric conformal field theory are the axiom defined, for any $k\in\{ 2,\dots, d\}$ and any generic points $\widetilde{x}_1,\cdots,\widetilde{x}_n\in\widetilde\Sigma$ in the universal covering, by requiring that  $\big\langle\big\langle \bold W^{(k)}(\widetilde x)\bold J(\widetilde x_1)\cdots\bold J(\widetilde x_n)\big\rangle\big\rangle$ is a holomorphic function of the variable $x\in\Sigma-\{z_1,\dots,z_M,x_1,\dots,x_n\}$ with meromorphic singularities at $x=x_i$ and $x=z_j$ whose behaviors are prescribed by the operator product expansions.
\end{definition}

This definition yields that for an admissible chiral current $\bold J$, the insertion $\langle\langle \bold W^{(k)}(\widetilde x)\bold J(\widetilde x_1)\cdots\bold J(\widetilde x_n)\rangle\rangle$ of any of the  fields $\bold W^{(k)}$, $k\in\{2,\dots,d\}$, should be uniquely valued as a holomorphic function of $x\in\Sigma-\{z_1,\dots,z_M,x_1,\dots,x_n\}$. We can therefore drop the upper-script in $\widetilde x$ and simply write $\bold W^{(k)}(x)$ when evaluating the insertion of such a generator. 

Replacing the previously computed expression for $\bold W^{(k)}$ in terms of the current $\bold J$ in $\langle\langle \bold W^{(k)}(x)\bold J(\widetilde x_1)\cdots\bold J(\widetilde x_n)\rangle\rangle$ yields that

\begin{proposition}{Ward identities as loop equations}
\begin{eqnarray}
\big\langle\big\langle \bold W^{(k)}(x)\bold J(\widetilde x_1)\cdots\bold J(\widetilde x_n)\big\rangle\big\rangle \nonumber \\
=\sum_{p=1}^k (-1)^{k-p}Q^{k-p}&&\sum_{\underset{k\leq i_p}{1\leq i_1<\dots< i_p\leq d}}\sum_{\underset{p+\sum_{l=1}^p q_l=k}{\overset{\forall l\in\llbracket 1,p\rrbracket}{0\leq q_l\leq i_l-i_{l-1}-1}}}\left( \prod_{l=1}^p\binom{i_l-i_{l-1}-1}{q_l}\right) \nonumber \\
&&\qquad\quad\times\quad \big\langle\big\langle:\partial^{q_1}\left(\bold J_{i_1}\dots\partial^{q_p}\bold J_{i_p}\right) (\widetilde x):\bold J(\widetilde x_1)\cdots\bold J(\widetilde x_n)\big\rangle\big\rangle \nonumber \\
&&
\end{eqnarray}
is a holomorphic function of $x\in\Sigma-\{z_1,\dots,z_M,x_1,\dots,x_n\}$.
\end{proposition}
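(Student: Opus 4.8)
The plan is to treat this statement as an essentially immediate corollary of two ingredients already in place: the Lemma expressing $\bold W^{(k)}$ as an explicit normal-ordered differential polynomial in the currents, and the generalized Ward identity axiom stated just before the Proposition. First I would substitute the closed formula for $\bold W^{(k)}(\widetilde x)$ furnished by the Lemma into the double bracket $\langle\langle \bold W^{(k)}(\widetilde x)\bold J(\widetilde x_1)\cdots\bold J(\widetilde x_n)\rangle\rangle$. Since this double bracket is defined as a ratio of ordinary correlation functions whose denominator does not involve the operator inserted at $x$, and since correlation functions are multilinear in their insertions, the bracket is linear in its first slot. The finite sum over $p$, over the index tuples $1\le i_1<\dots<i_p\le d$, and over the admissible derivative multidegrees $(q_1,\dots,q_p)$ can therefore be pulled outside the bracket, distributing the correlator over each monomial and reproducing the right-hand side term by term, with the same binomial coefficients and powers of $Q$.

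The one point requiring care in this step is the normal ordering. The symbol $:\,\cdot\,:$ in the Lemma is the point-split regularization of Axiom 2, i.e. the next-to-singular term extracted as the base-points of the currents collide. I would verify that inserting the composite field $:\partial^{q_1}(\bold J_{i_1}\cdots\partial^{q_p}\bold J_{i_p})(\widetilde x):$ into a correlator is unambiguous, so that the linearity step is legitimate; concretely each such insertion is the regularized coincidence limit of a product of current insertions with the prescribed derivatives applied, projected onto the Cartan directions $h_{i_1},\dots,h_{i_p}$, and the normal ordering used to assemble $\bold W^{(k)}$ is exactly the one used to write the monomials on the right.

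The holomorphicity claim is then immediate. By the generalized Ward identity axiom, $\langle\langle \bold W^{(k)}(\widetilde x)\bold J(\widetilde x_1)\cdots\bold J(\widetilde x_n)\rangle\rangle$ is a holomorphic function of $x$ on $\Sigma-\{z_1,\dots,z_M,x_1,\dots,x_n\}$, with only the controlled meromorphic singularities at the excluded points dictated by the operator product expansions; the same axiom guarantees single-valuedness, so the dependence on the chosen pre-image $\widetilde x$ drops out and one may legitimately write $\bold W^{(k)}(x)$. Because the left-hand side equals the explicit sum by the substitution above, that sum is the same holomorphic function of $x$, which is precisely the assertion of the Proposition.

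I expect the genuine obstacle to lie not in the algebra of the substitution but in the cancellation of spurious singularities that is tacitly packaged inside the axiom. Each individual monomial $\langle\langle:\partial^{q_1}(\bold J_{i_1}\cdots\partial^{q_p}\bold J_{i_p})(\widetilde x):\bold J(\widetilde x_1)\cdots\bold J(\widetilde x_n)\rangle\rangle$ will in general carry poles in $x$ at the collision points $x=x_i$ coming from the current--current OPE, together with multivaluedness inherited from the representative $\widetilde{\bold J}\in\mathfrak h^*$, which is only defined up to Weyl conjugation. It is solely the specific Weyl-invariant combination assembled by the quantum Miura transform into $\bold W^{(k)}$ that renders the total object single-valued and meromorphic with the prescribed principal parts. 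Making this rigorous would amount to checking that the right-hand side is invariant under the Weyl group permuting the $\bold J_i$ (so the choice of representative $\widetilde{\bold J}$ is immaterial) and that the higher-order poles predicted by the naive OPEs telescope; at the level of this Proposition, however, these are exactly the conditions built into the generalized Ward identity axiom, and so they are invoked rather than independently established.
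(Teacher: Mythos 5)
Your proposal is correct and takes essentially the same route as the paper: the paper states no separate proof for this Proposition, obtaining it exactly as you do, by substituting the Lemma's differential-polynomial expression for $\bold W^{(k)}$ into the insertion and letting the generalized Ward identity axiom (Definition of Ward identities) supply both the single-valuedness in $x$ and the holomorphicity on $\Sigma-\{z_1,\dots,z_M,x_1,\dots,x_n\}$. Your closing remarks on normal ordering, Weyl invariance, and the cancellation of spurious poles make explicit what the paper leaves tacit inside the axiom, but they do not alter the argument.
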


For a generic value of $Q$ and specializing to the cases $k=1,2$, the expressions for $\bold W^{(1)}$ and $\bold W^{(2)}$ yield that

\begin{corollary}
\begin{eqnarray}
\sum_{i=1}^d \big\langle\big\langle \bold J_i(\widetilde x)\bold J(\widetilde x_1)\cdots\bold J(\widetilde x_n)\big\rangle\big\rangle=0 \qquad \text{and}\qquad\qquad\qquad\qquad\\
\sum_{1\leq i<j\leq d}\big\langle\big\langle :\bold J_i\bold J_j(\widetilde x):\bold J(\widetilde x_1)\cdots\bold J(\widetilde x_n)\big\rangle\big\rangle-\, Q\sum_{i=2}^d (i-1)\partial_x\big\langle\big\langle\bold J_i(\widetilde x)\bold J(\widetilde x_1)\cdots\bold J(\widetilde x_n)\big\rangle\big\rangle\nonumber\\
\end{eqnarray}

is a holomorphic function of $x\in\Sigma-\{z_1,\dots,z_M,x_1,\dots,x_n\}$.
\end{corollary}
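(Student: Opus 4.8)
The plan is to derive both identities directly from the Proposition (Ward identities as loop equations) by specializing it to $k=1$ and $k=2$ and substituting the explicit low-order expressions for $\bold W^{(1)}$ and $\bold W^{(2)}$ recorded in the Example above. The only genuinely structural ingredient is the identity $\bold W^{(1)}=0$, which promotes the generic holomorphy statement of the Proposition to an exact vanishing in the first equation.

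First I would set $k=1$ in the Proposition. This forces $p=1$, hence $q_1=0$, with $i_1$ ranging over $\{1,\dots,d\}$, trivial binomial factor $\binom{i_1-1}{0}=1$ and prefactor $(-1)^{0}Q^{0}=1$, so that the right-hand side collapses to $\sum_{i=1}^d\langle\langle\bold J_i(\widetilde x)\bold J(\widetilde x_1)\cdots\bold J(\widetilde x_n)\rangle\rangle$. Now, since $\bold J_i=(h_i,\widetilde{\bold J})$ and the current is linear in its Lie-algebra argument, the traceless condition $\sum_{i=1}^d h_i=0$ defining $\mathfrak{sl}_d$ gives $\bold W^{(1)}=\sum_{i=1}^d\bold J_i=(\sum_{i=1}^d h_i,\widetilde{\bold J})=0$ identically as a field. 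Inserting the zero field yields the zero function, which is the asserted first identity and is of course stronger than holomorphy.

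Next I would set $k=2$. Either by the same specialization (now $p\in\{1,2\}$, with the $p=2$ term giving $\sum_{1\le i<j\le d}:\bold J_i\bold J_j:$ and the $p=1$ term, for which $q_1=1$, giving $-Q\sum_{i=2}^d(i-1)\partial\bold J_i$) or by simply reading off the Example, one has $\bold W^{(2)}=\sum_{1\le i<j\le d}:\bold J_i\bold J_j:-Q\sum_{i=2}^d(i-1)\partial\bold J_i$. Substituting this into the correlator reproduces the left-hand side of the second displayed equation once the holomorphic derivative is pulled through the bracket as $\partial_x\langle\langle\bold J_i(\widetilde x)\cdots\rangle\rangle$. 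The Proposition then asserts precisely that this combination extends to a holomorphic function of $x$ on $\Sigma-\{z_1,\dots,z_M,x_1,\dots,x_n\}$, which is the claim.

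The only step requiring care — and the closest thing to an obstacle in an otherwise direct specialization — is the interchange of the holomorphic derivative with the correlation bracket, namely $\langle\langle\partial\bold J_i(\widetilde x)\,\bold J(\widetilde x_1)\cdots\rangle\rangle=\partial_x\langle\langle\bold J_i(\widetilde x)\,\bold J(\widetilde x_1)\cdots\rangle\rangle$. This is justified because, by the axioms, the insertions of $\bold J_i(\widetilde x)$ are holomorphic in $x$ away from the punctures and diagonals, so the $x$-derivative is well defined and commutes with the remaining insertions; the normal ordering in $:\bold J_i\bold J_j:$ concerns only the operator and Lie-algebra structure and not the base variable $x$, so it is unaffected. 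Genericity of $Q$ is not needed at these low orders, though retaining $Q\neq 0$ is what keeps the $\partial\bold J_i$ contributions present.
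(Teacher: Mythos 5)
Your proposal is correct and matches the paper's own reasoning: the corollary is obtained there exactly by specializing the Proposition to $k=1,2$ and substituting the Example's expressions $\bold W^{(1)}=\sum_i \bold J_i=0$ and $\bold W^{(2)}=\sum_{i<j}:\bold J_i\bold J_j:-\,Q\sum_{i\geq 2}(i-1)\partial\bold J_i$. Your additional remarks (the vanishing of $\bold W^{(1)}$ via $\sum_i h_i=0$ upgrading holomorphy to exact vanishing, and commuting $\partial_x$ with the insertion bracket) simply make explicit what the paper leaves implicit.
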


\subsection{Classical limit $Q\longrightarrow 0$ and quantization}

The definition of the generators of $\mathcal W({\mathfrak{sl}_d})$ involved identifying the coefficients of two differential operators. 

\begin{definition}
The Casimir algebra $\mathcal W_0({\mathfrak{sl}_d})$  is the associative algebra generated by the modes $\{\bold W_{0;n}^{(k)}\}_{\overset{n\in\mathbb Z}{1\leq k\leq d}}$ of the generators defined in any local coordinates through the identities : 

\begin{eqnarray}
\sum_{k=0}^d (-1)^k\bold W_0^{(k)} y^{r-k} &=& :\left(y-\bold J_1\right)\cdots\left(y-\bold J_d\right):\\
\forall p\in\{1,\dots,d\},\quad \bold W_0^{(k)}(x) &\underset{def}{=}&\sum_{n\in\mathbb Z}\frac{\bold W_{0;n}^{(k)}(x_0)}{(x-x_0)^{n+k}}
\end{eqnarray}
\end{definition}

We get the following classical limit

\begin{theorem}{Quantization of Fuchsian differential systems}\\

\begin{eqnarray}
\sum_{k=0}^d (-1)^k\bold W_0^{(k)} y^{r-k} = \Symb\left(\sum_{k=0}^d (-1)^k \bold W^{(k)} \widehat y^{\, r-k}\right)
\end{eqnarray}

where the symbol of a differential operator $P(x,\widehat y\, )\in\mathbb C(x)[\, \widehat y\, ]$ is defined as

\begin{eqnarray}
\Symb\left(P(x,\widehat y\, )\right) \underset{def}{=} \underset{Q\rightarrow 0}{\lim}\left( e^{-xy/Q} P(x,\widehat y\, ) \cdot e^{xy/Q}\right)
\end{eqnarray}

and therefore the $\mathcal W({\mathfrak{sl}_d})$-symmetric conformal field theory quantizes the  Fuchsian differential system corresponding to this classical limit $\mathcal W_0({\mathfrak{sl}_d})$.
\end{theorem}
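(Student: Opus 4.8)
The plan is to recognize that the symbol map is nothing but conjugation by the invertible operator $e^{xy/Q}$ followed by the limit $Q\to 0$, and that this conjugation is an algebra automorphism of the ring of differential operators $\mathcal D_\Sigma$ in which the Miura operator $\bold{\widehat{\mathcal E}}$ lives. The statement then collapses to two elementary facts: how the symbol $\widehat y = Q\partial$ transforms under the conjugation, and how the coefficients $\bold W^{(k)}$ degenerate as $Q\to 0$. Because conjugation respects products, I can apply it either factor-by-factor to the Miura product or term-by-term to its expansion, whichever is more convenient.

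First I would establish the basic conjugation identity. Acting on any holomorphic test function and using $\partial\left(e^{xy/Q}\phi\right) = e^{xy/Q}\left((y/Q)+\partial\right)\phi$, one gets $e^{-xy/Q}(Q\partial)e^{xy/Q} = y + Q\partial$, hence $e^{-xy/Q}\,\widehat y^{\,j}\,e^{xy/Q} = (y+Q\partial)^{j}$ for every $j$. Since, by the preceding Lemma, each coefficient $\bold W^{(k)}$ is a normal-ordered differential polynomial in the fields $\bold J_i$ — that is, a multiplication operator which commutes with $e^{xy/Q}$ — conjugation distributes over the expansion $\bold{\widehat{\mathcal E}} = \sum_{k=0}^{d}(-1)^{k}\bold W^{(k)}\widehat y^{\,d-k}$ to yield $e^{-xy/Q}\bold{\widehat{\mathcal E}}\,e^{xy/Q} = \sum_{k=0}^{d}(-1)^{k}\bold W^{(k)}(y+Q\partial)^{d-k}$. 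Equivalently, applying the same identity to each linear factor of the factorized form gives $:\!(y+Q\partial-\bold J_1)\cdots(y+Q\partial-\bold J_d)\!:$, which is cleaner.

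Next I would take $Q\to 0$ and track the two sources of $Q$-dependence. On the one hand, every monomial of $(y+Q\partial)^{d-k}$ other than $y^{d-k}$ carries a positive power of $Q$ and is killed by the limit, leaving the pure multiplication operator $y^{d-k}$. On the other hand, the Lemma exhibits $\bold W^{(k)}$ as a polynomial in $Q$ whose constant term is the $p=k$, $q_1=\dots=q_k=0$ contribution, namely the $k$-th elementary symmetric function $\sum_{1\le i_1<\dots<i_k\le d}\!:\!\bold J_{i_1}\cdots\bold J_{i_k}\!:$, which is exactly $\bold W_0^{(k)}$ by the definition of the Casimir algebra. Combining the two, $\Symb\!\left(\bold{\widehat{\mathcal E}}\right) = \sum_{k=0}^{d}(-1)^{k}\bold W_0^{(k)}y^{d-k}$, or equivalently $:\!(y-\bold J_1)\cdots(y-\bold J_d)\!:$ from the factorized route, which is the asserted identity.

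The main obstacle is making these two limits interact cleanly with the field-theoretic normal ordering. One must check that conjugation by $e^{xy/Q}$ commutes with the normal-ordering prescription — which should hold because the subtracted singular terms of the operator product expansions are independent of the spectral variable $y$ — and that the $Q\to 0$ limit may be taken coefficient-by-coefficient inside the completion $\overline{\mathcal U}\!\left(\widehat{\mathfrak g}_1\right)$ in which these products are defined. Once this bookkeeping is justified, the geometric conclusion is immediate: the commutative generating function $\sum_{k}(-1)^{k}\bold W_0^{(k)}y^{d-k}$ is precisely the principal symbol of the differential operator $\bold{\widehat{\mathcal E}}$, so the $\mathcal W(\mathfrak{sl}_d)$-symmetric theory is a quantization of the Fuchsian system attached to $\mathcal W_0(\mathfrak{sl}_d)$, with the background charge $Q$ playing the role of the quantization parameter $\hbar$.
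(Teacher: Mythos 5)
Your proposal is correct and follows essentially the same route as the paper: the explicit Lemma expansion of $\bold W^{(k)}$ shows that every term except the elementary symmetric one carries a factor $Q^{k-p}$ with $k-p\geq 1$, so setting $Q\to 0$ (together with the observation that the symbol of $\widehat y=Q\partial$ is $y$) leaves exactly the Casimir generators $\bold W_0^{(k)}$. Your formalization via the conjugation identity $e^{-xy/Q}(Q\partial)e^{xy/Q}=y+Q\partial$ and the remark that it can equally be applied factor-by-factor to the Miura product are harmless elaborations of the same argument, not a different method.
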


More explicitely, putting $Q=0$ in the Ward identities, the only remaining term of the expression of last proposition is equal to

\begin{eqnarray}
\big\langle\big\langle \bold W_0^{(k)}(x)\bold J(\widetilde x_1)\cdots\bold J(\widetilde x_n)\big\rangle\big\rangle = \sum_{1\leq i_1<\dots< i_k\leq r} \big\langle\big\langle:\left(\bold J_{i_1}\cdots\bold J_{i_k}\right) (\widetilde x):\bold J(\widetilde x_1)\cdots\bold J(\widetilde x_n)\big\rangle\big\rangle \nonumber\\
\end{eqnarray}

that is the sum over all possible ways to insert $k$ of the $d$ copies of the chiral current $\bold J$. This is exactly what one would obtain by writing the Ward identities for a Casimir algebra-symmetric conformal field theory and we can read the loop equations on the right hand side \cite{BER1}.

\section{Quantum geometry}

We shall now define the quantum geometry associated to the $\mathcal W({\mathfrak{sl}_d})$-symmetric conformal field theory we are considering.  It first consists in a definition of the quantum spectral curve and quantum complex structure, encoded in the 2-points function, when we assume the existence of a topological regime. We will then introduce the relevant topological recursion and show one of the main results of the chapter, namely that it solves the Ward identities.

We will for simplicity restrict ourselves to the case where the Riemann surface is the Riemann sphere $\Sigma=\mathbb C\mathbb P^1$, although most of the reasoning is local and could be generalized to an arbitrary Riemann surface. Let us just remark that in this case, the generalized Ward identities of the W-algebra would actually express that the differential operator $\widehat{\mathcal E}$ given by the quantum Miura transform defines an \textit{oper}.

\subsection{Topological regime and quantum spectral curve}

This is the main assumption of this study. Let us suppose that all the functions appearing in our construction are now formal series in an expansion parameter $\varepsilon\longrightarrow 0$. 

This corresponds to a so-called \textit{heavy limit} where we rescale all the charges simultaneously $\alpha_j\longmapsto \frac 1\varepsilon \alpha_j$. This is equivalent to rescaling the chiral current $\bold J\longmapsto\frac 1\varepsilon \bold J$ and ultimately this could be reabsorbed in a redefinition of the background charge $Q\longmapsto \widetilde Q\underset{def}{=}   Q/\varepsilon$. We however consider the limit $\varepsilon\longrightarrow 0$ keeping $Q$ fixed.

Let us now assume that the chiral correlation functions with current insertions admit $\varepsilon\longrightarrow 0$ asymptotic expansions of the form

\begin{eqnarray}
\big\langle\big\langle \bold J_{i_1}(\widetilde x_1)\cdots\bold J_{i_n}(\widetilde x_n)\big\rangle\big\rangle \underset{def}{=} \sum_{g=0}^\infty \varepsilon^{2g-2+n} \left(W_{g,n}(\overset{i_1}{x_1},\dots, \overset{i_n}{x_n}) -\delta_{n,2}\delta_{g,0} \frac{(h_{i_1},h_{i_2})}{(x_1-x_2)^2}\right) \nonumber
\end{eqnarray}
\begin{eqnarray}
~
\end{eqnarray}

for all $n\in\mathbb N^*$ for which $W_{g,n}$, with $g\in\mathbb N$, is a multi-valued meromorphic function on $n$ copies of the universal covering of the punctured Riemann sphere. Such expansions define a topological regime.

\begin{remark}
Note that in this article, we make the choice of not writing explicitly the universal covering dependence of the functions appearing as coefficients of topological expansions to lighten notations. The reader should nevertheless keep in mind that these coefficients are defined on the quantum covering whose points are locally described as pairs $(\widetilde x,i)$, denoted $\overset{i}{x}$ when appearing as arguments.
\end{remark}

As a consequence, the differential operators obtained by inserting $\bold{\widehat{\mathcal E}}(x)$ into a chiral correlation function with current insertions $\langle\langle \bold J(\widetilde x_1)\cdots\bold J(\widetilde x_n)\rangle\rangle$ also admit asymptotic expansions of a similar form

\begin{eqnarray}
\left\langle\left\langle \bold{\widehat{\mathcal E}}(x)\, \bold J_{i_1}(\widetilde x_1)\cdots\bold J_{i_n}(\widetilde x_n)\right\rangle\right\rangle &\underset{def}{=}&\sum_{g=0}^\infty \varepsilon^{2g-1+n}\bold{\mathcal E}_{n}^{(g)}(x;\overset{i_1}{x_1},\dots, \overset{i_n}{x_n})\\
&\underset{def}{=}& \sum_{k=0}^d\sum_{g=0}^\infty (-1)^{d-k}\varepsilon^{2g-1+n}\nonumber\\
&& \qquad\quad\quad\times \quad P_{n,d-k}^{(g)}(x;\overset{i_1}{x_1},\dots, \overset{i_n}{x_n})\widehat y^{\, k}\nonumber\\
\end{eqnarray}

\begin{definition}
The differential operator $\bold{\mathcal E}\underset{def}{=} \bold{\mathcal E}_0^{(0)}$ is called the quantum spectral curve.
\end{definition}

The Ward identities extended to these formal $\varepsilon$-expansions imply that  the operator $\bold{\mathcal E}_n^{(g)}(x;\overset{i_1}{x_1},\dots, \overset{i_n}{x_n})$ is, for all $g,n\in\mathbb N$, $n\neq 0$, a meromorphic function of the variable $x\in\mathbb C$ with possible singularities at $x=\infty$, $x=x_i$ for some $i\in\{1,\dots,d\}$ or $x=z_j$ for some $j\in\{1,\dots,M\}$ and nowhere else.

This definition can be interpreted as exhibiting the quantization of a classical spectral curve, perturbatively this time. Indeed , define a function $E$ of the variables $x,y\in\mathbb C$ by the generic assignment

\begin{eqnarray}
E(x,y)\underset{def}{=} \Symb\left(\bold{\mathcal E}_0^{(0)}(x)\right)
\end{eqnarray}

The Riemann surface defined by the equation $E(x,y)=0$, the character variety of the quantum spectral curve, embeds in $\mathbb C^2$ and defines a $d:1$ cover of the complex plane by a meromorphic projection $x:\mathcal S \longrightarrow \mathbb C$ called the classical spectral curve of the $\mathcal W({\mathfrak{sl}_d})$-symmetric conformal field theory.

Recall that to any classical integrable systems presented in Lax form can be associated its corresponding spectral curve, a meromorphic covering of complex curves, and that this is the starting point to run the topological recursion of \cite{BEO},\cite{EO} in order for example to compute recursively the expansion coefficients of generating functions of derivatives of the $\tau$-function. We wish to upgrade these techniques to the non-commutative, or quantum, case using this operator formalism arising from conformal field theory.

\subsection{Fermionic description and notion of sheets}

A dual point of view to describe the quantum spectral curve is through the solutions of the linear differential equation it defines. Let us therefore consider a set of $d$ independent (multi-valued) functions $\psi_j$, for $j\in\{1,\dots,d\}$, satisfying

\begin{eqnarray}
\psi_j\cdot\bold{\mathcal E}=0
\end{eqnarray}

where the differential operator acts from the right.

\begin{theorem}
The quantum spectral curve decomposes as

\begin{eqnarray}
\bold{\mathcal E}=(\widehat y-Y_1( \widetilde x))\cdots(\widehat y-Y_d(\widetilde x))
\end{eqnarray}

where for any $\mu\in\{1,\dots,d\}$, 

\begin{eqnarray}
Y_\mu\underset{def}{=} Q\partial \left( \ln \frac{D_{\mu-1}}{D_\mu}\right)=\frac{Q\partial D_{\mu-1}}{D_{\mu-1}}-\frac{Q\partial D_\mu}{D_\mu}
\end{eqnarray}

with 
\begin{eqnarray}
D_\mu\underset{def}{=}\underset{0\leq i,j\leq \mu-1}{ \Det}\left((-Q\partial)^i\psi_{j+1}\right)
\end{eqnarray}

and the convention that $D_0\underset{def}{=}1$.
\end{theorem}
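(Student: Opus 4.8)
At the classical order $g=0$, $n=0$ the operator $\bold{\mathcal E}=\bold{\mathcal E}_0^{(0)}$ has \emph{scalar} (function) coefficients and, since the Miura transform begins with $\bold W^{(0)}=1$, it is monic of order $d$ in $\widehat y=Q\partial$. My plan is to recognize the statement as the Frobenius (P\'olya) factorization of a scalar linear differential operator, transported to the right-action convention. First I would fix that convention: the right action is realized as the formal adjoint, $\psi\cdot\widehat y=-Q\partial\psi$, hence $\psi\cdot\widehat y^{\,i}=(-Q\partial)^i\psi$, which is exactly why the $D_\mu=\det_{0\le i,j\le\mu-1}\big((-Q\partial)^i\psi_{j+1}\big)$ are the Wronskians of $\psi_1,\dots,\psi_\mu$ for the derivation $\delta\underset{def}{=}-Q\partial$. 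Because a right action reverses composition order, $\psi\cdot\prod_{\mu=1}^d(\widehat y-Y_\mu)=(\delta-Y_d)\cdots(\delta-Y_1)\psi$, and a short computation rewrites the claimed coefficients as $Y_\mu=\delta\ln(D_\mu/D_{\mu-1})$. Thus it suffices to show that the left-applied operator $(\delta-Y_d)\cdots(\delta-Y_1)$ annihilates every $\psi_j$, and then to match with the monic $\bold{\mathcal E}$.

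I would argue by induction on $d$. The base case $d=1$ is immediate, since $Y_1=\delta\ln\psi_1=\delta\psi_1/\psi_1$ gives $(\delta-Y_1)\psi_1=0$, and matching with the monic first-order $\bold{\mathcal E}$ is automatic. For the inductive step I peel off the innermost factor $(\delta-Y_1)$, which kills $\psi_1$, and set $g_j\underset{def}{=}(\delta-Y_1)\psi_j=\psi_1\,\delta(\psi_j/\psi_1)$ for $j=2,\dots,d$; these are $d-1$ independent functions to which I would apply the inductive hypothesis.

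The heart of the argument is the reduction-of-order Wronskian identity controlling the $g_j$. Using the scaling law $W_\delta(\psi_1 v_1,\dots,\psi_1 v_m)=\psi_1^{\,m}W_\delta(v_1,\dots,v_m)$ together with the column expansion $W_\delta(\psi_1,\dots,\psi_n)=\psi_1^{\,n}\,W_\delta\big(\delta(\psi_2/\psi_1),\dots,\delta(\psi_n/\psi_1)\big)$, I would prove that the Wronskians of $g_2,\dots,g_{\nu+1}$ equal $D_{\nu+1}/D_1$. Consequently the coefficients attached to the $g_j$ by the inductive hypothesis are precisely $\delta\ln\big((D_{\nu+1}/D_1)/(D_\nu/D_1)\big)=Y_{\nu+1}$, so the product $(\widehat y-Y_2)\cdots(\widehat y-Y_d)$ annihilates each $g_j$. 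Writing $\prod_{\mu=1}^d(\widehat y-Y_\mu)=(\widehat y-Y_1)(\widehat y-Y_2)\cdots(\widehat y-Y_d)$ and using $\psi_j\cdot(\widehat y-Y_1)=g_j$ (with $g_1=0$) then shows that this product annihilates all the $\psi_j$. Since both this product and $\bold{\mathcal E}$ are monic of order $d$, and their difference, of order $\le d-1$, kills $d$ independent functions, the difference vanishes and the factorization follows.

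The main obstacle is the reduction-of-order Wronskian identity: everything hinges on the bookkeeping relation $W_\delta(g_2,\dots,g_{\nu+1})=D_{\nu+1}/D_1$, which is what makes the induction close by converting the original Wronskians into those of the reduced system. A secondary point to address is genericity: dividing by the $D_\mu$ and defining the $Y_\mu$ requires the leading Wronskians $D_\mu$ to be nonvanishing, which holds on the generic locus consistent with the simple quantum branch points assumed throughout this work.
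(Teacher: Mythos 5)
Your proof is correct, and it reaches the factorization by a genuinely different route from the paper's. The paper runs the logic in the opposite direction: it \emph{defines} the $Y_\mu$ recursively by the annihilation requirement $\psi_\nu\cdot(\widehat y-Y_1)\cdots(\widehat y-Y_\mu)=0$ for all $\nu\le\mu$, so that the solution-space property holds by construction, and then identifies these recursively defined coefficients with the Wronskian ratios $Q\partial\ln(D_{\mu-1}/D_\mu)$ through a single determinantal lemma: each partial product $(\widehat y-Y_1)\cdots(\widehat y-Y_\mu)$ equals the $(\mu+1)\times(\mu+1)$ bordered determinant whose first column is $1,\widehat y,\dots,\widehat y^{\,\mu}$ and whose remaining columns are the $(-Q\partial)^i\psi_j$, multiplied on the right by $1/D_\mu$ --- the right-action transcription of the classical representation of a monic operator through the Wronskians of its solutions. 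You instead take the explicit formula $Y_\mu=\delta\ln(D_\mu/D_{\mu-1})$, $\delta=-Q\partial$, as the starting point and \emph{verify} the annihilation property by induction on $d$ via reduction of order; your key lemma is the Wronskian identity $W_\delta(g_2,\dots,g_{\nu+1})=D_{\nu+1}/D_1$ for $g_j=\psi_1\,\delta(\psi_j/\psi_1)$, i.e.\ the classical Frobenius--P\'olya factorization. Both arguments close with the same uniqueness step: two monic operators of order $d$ whose difference, of order at most $d-1$, annihilates $d$ independent functions must coincide. What the paper's lemma buys is a closed form for every partial product at once, from which the expression of each $Y_\mu$ is read off directly; what your induction buys is an elementary, genuinely self-contained argument (the paper's lemma is stated without proof), and it makes explicit two points the paper leaves implicit --- the right-action convention $\psi\cdot\widehat y^{\,i}=(-Q\partial)^i\psi$ with its reversal of composition order, and the genericity condition $D_\mu\neq 0$ needed for the $Y_\mu$ to be defined at all.
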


\begin{remark}
Let us mention that such a factorization is in general not unique and that each of these multi-valued factors have monodromies canceling in such a way that the resulting product is a well-defined differential operator on $\Sigma-\{z_1,\dots,z_M\}$. In particular, it does not depend on a choice of pre-image $\widetilde x$ of $x\in\Sigma-\{z_1,\dots,z_M\}$ in the universal covering.
\end{remark}

\begin{proof}{
If we define recursively the $Y_\mu$'s such that for any $\nu\leq \mu$,

\begin{eqnarray}
\psi_\nu \cdot (\widehat y-Y_1( \widetilde x))\cdots (\widehat y -Y_\mu(\widetilde x))=0
\end{eqnarray}

then the wanted result is a straightforward corollary of the following

\begin{lemma}
For any $\mu\in\{1,\dots,d\}$,

\begin{eqnarray}
(\widehat y- Y_1( \widetilde x))\cdots(\widehat y-Y_\mu( \widetilde x))= \underset{\mu+1}{ \Det}\begin{pmatrix}
1 & \psi_1(\widetilde x) &  \dots &  \psi_\mu(\widetilde x)  \cr
 \widehat y &  (-Q\partial)\psi_1(\widetilde x)  & \dots &  (-Q\partial)\psi_\mu(\widetilde x) \cr
\widehat y^{\, 2} &  (-Q\partial)^2\psi_1(\widetilde x) &  \dots &  (-Q\partial)^2\psi_\mu(\widetilde x)  \cr
\vdots &  &  & \vdots  \cr
\widehat y^{\, \mu} &  (-Q\partial)^\mu\psi_1(\widetilde x)  & \dots &  (-Q\partial)^\mu\psi_\mu(\widetilde x) \cr
\end{pmatrix}
 \frac {1}{D_\mu(\widetilde x)} \nonumber\\
 &
\end{eqnarray}
\end{lemma}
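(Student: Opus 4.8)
The plan is to identify the operator $\Delta_\mu$, defined as the $(\mu+1)\times(\mu+1)$ determinant expanded along its first, operator-valued, column, with $(-1)^\mu D_\mu$ times the \emph{unique monic differential operator of order $\mu$ in $\widehat y=Q\partial$ that annihilates $\psi_1,\dots,\psi_\mu$ from the right}; since $(\widehat y-Y_1)\cdots(\widehat y-Y_\mu)$ is exactly that operator by the recursive definition of the $Y_\mu$, the lemma follows after dividing by $D_\mu$. I first fix the right action to be the formal adjoint, namely $\psi\cdot\widehat y^{\,i}=(-Q\partial)^i\psi$ extended multiplicatively by $\psi\cdot(AB)=(\psi\cdot A)\,B$. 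This is precisely the rule that makes the $i$-th entry of the first column of the matrix equal to $\psi\cdot\widehat y^{\,i}$, and it is the meaning of the operator $\bold{\mathcal E}$ \emph{acting from the right}; it is also why the function columns carry $(-Q\partial)^i$ rather than $(Q\partial)^i$.

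The conceptual core is the \emph{kernel property}. Writing the Laplace expansion along the first column as $\Delta_\mu=\sum_{i=0}^{\mu}\widehat y^{\,i}\,\mathrm{cof}_i$, with scalar cofactors $\mathrm{cof}_i=(-1)^i M_{i,0}$ placed to the right of the operators, right-multiplication by $\psi_\nu$ together with the module rule gives $\psi_\nu\cdot\Delta_\mu=\sum_i\big((-Q\partial)^i\psi_\nu\big)\,\mathrm{cof}_i$. This is the expansion along the first column of the matrix whose first column has been replaced by $\big((-Q\partial)^i\psi_\nu\big)_{0\le i\le\mu}$, and for $\nu\in\{1,\dots,\mu\}$ that replacement is literally the $\nu$-th column already present. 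The determinant therefore has two equal columns and vanishes, so $\psi_\nu\cdot\Delta_\mu=0$ for every $\nu\le\mu$, hence the same holds after normalizing by $D_\mu$.

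Next I pin down the normalization and invoke uniqueness. The coefficient of the top power $\widehat y^{\,\mu}$ in $\Delta_\mu$ is the cofactor $\mathrm{cof}_\mu$, i.e. the minor obtained by deleting the last row and first column, which by inspection equals $D_\mu$ up to the sign $(-1)^\mu$; hence $(-1)^\mu D_\mu^{-1}\Delta_\mu$ is monic of order $\mu$ and right-annihilates $\psi_1,\dots,\psi_\mu$. If $P,P'$ are two monic order-$\mu$ operators with $\psi_\nu\cdot(P-P')=0$ for all $\nu\le\mu$, then $R:=P-P'$ has order $\le\mu-1$ while its right kernel, being the ordinary kernel of the order-$\le\mu-1$ operator $R^\dagger$, contains the $\mu$ linearly independent functions $\psi_1,\dots,\psi_\mu$; since a nonzero order-$m$ operator kills at most $m$ independent functions, $R=0$. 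As $(\widehat y-Y_1)\cdots(\widehat y-Y_\mu)$ is monic of order $\mu$ and, by the recursive definition of the $Y_\mu$, right-annihilates all the $\psi_\nu$, it coincides with $(-1)^\mu D_\mu^{-1}\Delta_\mu$, which is the assertion.

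I expect the only genuine difficulty to be bookkeeping rather than conceptual: one must give an unambiguous meaning to the determinant of a matrix one of whose columns consists of the noncommuting symbols $\widehat y^{\,i}$ (choice of expansion, whether the scalar cofactors sit to the left or the right of the operators, and whether one divides by $D_\mu$ on the left or on the right), and then carry the resulting global sign $(-1)^\mu$ consistently. This sign is already visible in the base case $\mu=1$, where a direct computation yields $\Delta_1=-D_1(\widehat y-Y_1)$, and matching it against the normalization chosen in the statement fixes the convention once and for all. An alternative route is a direct induction proving $D_\mu^{-1}\Delta_\mu=\big(D_{\mu-1}^{-1}\Delta_{\mu-1}\big)(\widehat y-Y_\mu)$; there the same content reappears as a Desnanot--Jacobi (Dodgson condensation) identity among $D_{\mu-1}$, $D_\mu$ and their $\widehat y$-derivatives, which is precisely what reproduces the closed form $Y_\mu=Q\partial\ln(D_{\mu-1}/D_\mu)$ used in the theorem.
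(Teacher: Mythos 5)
Your proof is correct and follows essentially the same route as the paper: the paper's own (very terse) justification is precisely the uniqueness principle that two differential operators with the same degree, leading coefficient and space of solutions must coincide, applied to the product $(\widehat y-Y_1)\cdots(\widehat y-Y_\mu)$ on one side and the normalized determinant on the other. Your write-up supplies the details the paper leaves implicit --- the right-module structure $\psi\cdot\widehat y^{\,i}=(-Q\partial)^i\psi$, the repeated-column argument for the kernel property, the cofactor computation of the leading coefficient, and the $(-1)^\mu$ sign/ordering conventions (including that the division by $D_\mu$ must be on the right for the kernel property to survive) --- which is a useful sharpening but not a different method.
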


Indeed, since then the differential operator of the right hand side and the quantum spectral curve have same degree, leading coefficient and space of solutions, we conclude that hey are equal.
}\end{proof}

\begin{definition}{Master loop equation}\\
Define the multi-valued auxiliary operator

\begin{eqnarray}
\bold{\mathcal U}\underset{def}{=}(\widehat y-Y_2( \widetilde x))\cdots(\widehat y-Y_d(\widetilde x))
\end{eqnarray}

such that the following identity holds

\begin{eqnarray}
(\widehat y - Y_1(\widetilde x))\, \bold{\mathcal U}=\bold{\mathcal E}
\end{eqnarray}

and is called the master loop equation.
\end{definition}

\begin{remark}
This terminology comes from the connection existing between the quantum geometry of the $\mathcal W({\mathfrak{sl}_d})$-symmetric conformal field theory we are studying and that of the $\beta(\mathfrak{sl}_d)$-deformed two-matrix model, a generalization of that of \cite{BEMPF},\cite{CEO}. These similarities hide a full correspondence between the theories whose description should appear soon in the sequel \cite{BE2} of the present paper.
\end{remark}

\begin{remark}
The geometrical interpretation of this factorization is that there exists a function $Y$ that is actually multi-valued on the punctured sphere in such a way that $Y(\overset{i}{x})=Y_i(\widetilde x)$ is its value at a generic point $\widetilde x\in\widetilde\Sigma$ taken in the sheet $i\in\{1,\dots,d\}$. The sheet labeled by $i=1$ is often called the \textit{physical sheet}. The quantum sheets therefore label solutions of the quantum spectral curve and in this sense, the function $Y$ is uniquely valued on the quantum spectral curve.
\end{remark}

\subsection{2-points function}

Recall that the 2-points function has the asymptotic expansion

\begin{eqnarray}
\big\langle\big\langle \bold J_{i_1}(\widetilde x_1)\bold J_{i_2}(\widetilde x_2)\big\rangle\big\rangle = \sum_{g=0}^\infty \varepsilon^{2g} \left(W_{g,2}(\overset{i_1}{x_1},\overset{i_2}{x_n}) - \delta_{g,0}\frac{(h_{i_1},h_{i_2})}{(x_1-x_2)^2}\right)
\end{eqnarray}

In the classical formalism of \cite{BEO},\cite{EO}, the initial data needed to run the topological recursion included a symmetric bi-differential $\omega_{0,2}$ on two copies of this curve having a double pole with no residue,  bi-residue equal to 1 on the diagonal divisor and no other singularities (hence the level 1 assumption for the affine Lie algebra). A natural candidate then was the Bergman kernel, or second-kind fundamental form, associated to a choice of Torelli marking, that is to a symplectic basis of real codimension 1 homology cycles. The wanted singularities plus the requirement of vanishing periods on a given half of the symplectic basis fixes the Bergman kernel uniquely. The

The quantum setup under here under study is a direct generalization of that of both \cite{CEM} and \cite{CER} where the constructions can be interpreted as solving the Ward identities of a $\mathcal W({\mathfrak{sl}_2})$-symmetric conformal field theory. In \cite{CEM}, the existence of a hyperelliptic involution simplified the discussion and a structure of quantum Riemann surface with cuts, cycles, holomorphic differential forms and their mutual pairing was defined. In particular, a quantum Bergman kernel was introduced and its periods were vanishing on a half of a symplectic basis of quantum cycles.

In the classical limit $Q\longrightarrow 0$, the problem reduces to the computation of a conformal block of a Casimir algebra \cite{BER1}. The algebraic curve embedded in $T^*\mathbb C=\mathbb C^2$ that one obtains by taking the symbol of the quantum spectral curve of the previous section then contains the initial data needed to run, when it applies, the topological recursion. That is to say that even though $W_{0,1}$ and $W_{0,2}$ seem to be more complicated conformal blocks than the one we wish to compute from the conformal field theory point of view, they can actually be extracted directly from the data of the curve. In particular, the singular part of $W_{0,2}$ can be expressed rationally in terms of the coefficients of the equation defining the curve \cite{E2} and its regular part as a bilinear of a basis of holomorphic differentials.

This algebro-geometric construction is expected to extend to the (quantum) setup here and will be further detailed in subsequent work. By anticipation, we assume that the data of this two-point function is contained in the quantum spectral curve.

\pagebreak

\begin{definition}{Quantum Bergman kernel}\\
We interpret the leading order of the 2-points function of the theory as the quantum Bergman kernel, or second-kind fundamental form on the quantum spectral curve, and denote it by

\begin{eqnarray}
B\underset{def}{=} W_{0,2}
\end{eqnarray}

Accordingly, we define the third-kind differential form $G$ (up to a function of the variable $\overset{i}{x}$ that will play no role in what follows) by the formula

\begin{eqnarray}
\partial_z G(\overset ix,\overset jz) =  B(\overset ix,\overset jz)
\end{eqnarray}
\end{definition}

\section{Topological recursion}

\subsection{Ward identities in the $\varepsilon\longrightarrow 0$ expansion}

To solve the Ward identities recursively, we must rewrite them order by order in the topological regime. An exceptional feature of the structure of these equations is that generically only the two lowest order Ward identities, that we will call \textit{linear and quadratic loop equation}, are needed to reconstruct the chiral correlation functions with current insertions perturbatively. This illustrates the \textit{over-determination} of integrable systems. Classically, the non-generic cases correspond to those where the spectral curves exhibits non-simple ramification points and one then needs the more general formalism of \cite{BouE}.

Recall that using the multi-sheet notation, we denoted the insertion of the $k^{th}$ $\mathcal W({\mathfrak{sl}_d})$-algebra generator $\bold W^{(k)}$ at a generic point $x\in\Sigma$ into a chiral correlation function with $n\in\mathbb N^*$ current insertions at generic points $x_1,\dots,x_n\in\Sigma$ by

\begin{eqnarray}
\big\langle\big\langle \bold W^{(k)}(x)\bold J_{i_1}(\widetilde x_1),\dots,\bold J_{i_n}(\widetilde x_n)\big\rangle\big\rangle \underset{def}{=} \sum_{g=0}^\infty (-1)^k\varepsilon^{2g-1+n} P^{(g)}_{n,k}(x;\overset{i_1}{x_1},\dots,\overset{i_n}{x_n})\nonumber\\
\end{eqnarray}

Replacing this expression in the two first conformal Ward identities yields

\begin{theorem}{Linear and quadratic loop equations}\\
The axioms of the $\mathcal W({\mathfrak{sl}_d})$-symmetric conformal field theory require the linear and quadratic loop equations in the topological limit. Namely for any choice of integers $n,g\in\mathbb N$, $n\neq 0$, and any generic choice of points and sheet indices $X=\{\overset{i_1}{x_1},\dots,\overset{i_n}{x_n}\}$, 

\begin{eqnarray}
P^{(g)}_{n;0}(x,J)=\sum_{i=1}^d W_{n+1}^{(g)}(\overset ix,X)=0
\end{eqnarray}

and
\begin{eqnarray}
P^{(g)}_{n;1}(x,X)=\sum_{1\leq i<j\leq d}\left(W_{n+2}^{(g-1)}(\overset ix, \overset jx, X) + \sum_{\underset{h+h'=g}{I\sqcup I'=X}} W^{(h)}_{1+\# I}(\overset ix,I)W^{(h')}_{1+\# I'}(\overset jx,I')\right) \nonumber \\
-\, Q\sum_{i=2}^d (i-1)\partial_x W_{n+1}^{(g)}(\overset jx,X)
\end{eqnarray}

is a holomorphic functions of $x\in\Sigma-\{z_1,\dots,z_M,x_1,\dots,x_n\}$.
\end{theorem}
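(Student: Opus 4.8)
My plan is to derive both equations by feeding the assumed topological $\varepsilon$-expansions into the two Ward identities of the preceding Corollary (the $k=1$ and $k=2$ specializations of the Proposition) and then reading off the coefficient of each power of $\varepsilon$. The holomorphy in $x\in\Sigma-\{z_1,\dots,z_M,x_1,\dots,x_n\}$ demanded at the end is not the hard point: it descends order by order from the holomorphy of the full Ward identity (\emph{Axiom 3} generalized), once one knows that the $\varepsilon$-expansion of a $\mathcal W$-generator insertion is again a term-wise holomorphic formal series. The substance is therefore the explicit combinatorial form of the coefficients, in particular the genus convolution and the factorized sum.

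I would dispatch the linear equation first. Writing the $(n+1)$-point insertion as
\begin{eqnarray}
\langle\langle\mathbf J_i(\widetilde x)\,\mathbf J_{i_1}(\widetilde x_1)\cdots\mathbf J_{i_n}(\widetilde x_n)\rangle\rangle=\sum_{g\ge0}\varepsilon^{2g-1+n}\Big(W^{(g)}_{n+1}(\overset ix,X)-\delta_{n,1}\delta_{g,0}\tfrac{(h_i,h_{i_1})}{(x-x_1)^2}\Big),\nonumber
\end{eqnarray}
summing over the sheet label $i$, and using that $\sum_{i=1}^d h_i=0$ forces $\sum_i(h_i,h_{i_1})=0$, the second-kind subtraction drops out. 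Since $\sum_{i=1}^d\langle\langle\mathbf J_i(\widetilde x)\,\mathbf J(\widetilde x_1)\cdots\mathbf J(\widetilde x_n)\rangle\rangle$ vanishes identically in $x$, every $\varepsilon$-coefficient vanishes separately, which is exactly $\sum_{i=1}^d W^{(g)}_{n+1}(\overset ix,X)=0$.

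The quadratic equation carries the weight of the argument, and its engine is the normal-ordered insertion $\langle\langle:\mathbf J_i\mathbf J_j(\widetilde x):\mathbf J(\widetilde x_1)\cdots\mathbf J(\widetilde x_n)\rangle\rangle$. I would realize $:\mathbf J_i\mathbf J_j(\widetilde x):$ by point splitting, as the finite part of $\mathbf J_i(\widetilde x')\mathbf J_j(\widetilde x)$ as $x'\to x$ with the \emph{Axiom 2} singularity removed, and then expand the correlator of this single composite field by the cumulant (cluster) decomposition. Because the composite pins both elementary factors to the same base-point, only two families of clusters survive: the component carrying both $\mathbf J_i(\widetilde x)$ and $\mathbf J_j(\widetilde x)$ together with all of $X$, giving the connected correlator $W_{n+2}(\overset ix,\overset jx,X)$, and those splitting $\mathbf J_i(\widetilde x)$ and $\mathbf J_j(\widetilde x)$ into two distinct components, giving the sum over $I\sqcup I'=X$. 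Additivity of $-\chi=2g-2+n$ under disjoint union fixes the indices: an $(n+2)$-point connected function at genus $g-1$ carries the same weight $2g-2+n$ as a genus-$g$ $n$-point object, so the connected term appears as $W^{(g-1)}_{n+2}$, while the two-component clusters contribute with $h+h'=g$. Crucially the subtraction built into the point splitting is the same datum as the $\delta_{n,2}\delta_{g,0}(h_{i_1},h_{i_2})/(x_1-x_2)^2$ piece carried in the two-point expansion, so that $W^{(g-1)}_{n+2}(\overset ix,\overset jx,X)$ is the unambiguous regularized diagonal value despite $(h_i,h_j)=-1/d\neq0$ for $i\neq j$. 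Finally, because the topological regime is the heavy limit in which the background charge enters as $\widetilde Q=Q/\varepsilon$, the linear term $-Q\sum_{i=2}^d(i-1)\partial_x\langle\langle\mathbf J_i(\widetilde x)\,\mathbf J(\widetilde x_1)\cdots\mathbf J(\widetilde x_n)\rangle\rangle$ lands at precisely the same order $\varepsilon^{2g-2+n}$ as the two quadratic contributions; collecting that coefficient reproduces $P^{(g)}_{n;1}$, whose holomorphy away from the $z_j$ and $x_a$ is the order-$\varepsilon$ shadow of the quadratic Ward identity.

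The main obstacle is the simultaneous control of the cluster combinatorics and the $\varepsilon$-power accounting. Two points need genuine care: first, making the cumulant decomposition of a \emph{composite} rather than elementary field precise, and checking that the point-splitting subtraction cancels the second-kind subtraction of $W_{0,2}$ term by term on the diagonal; second, confirming that the pieces of the $\mathbf W^{(2)}$-insertion which are naively of lower $\varepsilon$-order than the one read off are consistently controlled — in particular that the leading fully-split contributions are governed by the already-proved linear loop equation $\sum_i W^{(g)}_{n+1}=0$ — so that the extracted coefficient is exactly the stated combination. The remaining steps, the bare identification of coefficients and the transport of holomorphy through the formal series, are routine.
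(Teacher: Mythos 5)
Your proposal is correct and takes essentially the same route as the paper: the paper's entire proof is the single sentence that the result follows ``by induction on $2g-2+n$'' as ``a straightforward computation'', which can only mean the substitution of the assumed $\varepsilon$-expansions into the two Ward identities of the preceding Corollary followed by order-by-order coefficient extraction, exactly what you spell out. If anything your write-up is more complete than the paper's, since the two points you flag for care --- the cumulant decomposition of the normal-ordered composite insertion (both factors in one connected block, versus split across two blocks whose union carries all of $X$) and the heavy-limit bookkeeping that places the $Q\,\partial_x W^{(g)}_{n+1}$ term at the same order $\varepsilon^{2g-2+n}$ as the quadratic contributions --- are precisely what the paper's unexplained induction and computation would have to supply.
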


\begin{proof}{
The proof is a done by induction on $2g-2+n$ and it is a straightforward computation.
}\end{proof}

\subsection{Bethe roots and kernel}

Generically, a zero of $D_\mu$ for a given $\mu\in\{1,\dots,d\}$ is both a pole of $Y_\mu$ and $Y_{\mu+1}$, with residue $\pm 1$ and is not a zero of any other $D_\nu$ and therefore not a pole of any other $Y_\nu$. This statement is the quantum analog to that saying that generically, there are only two sheets meeting at a branch point of an algebraic curve.

\begin{definition}{Bethe roots}\\
Let us denote by $S_\mu \underset{def}{=}\{s\in\mathbb C\, |\, D_\mu(s)=0\}$ the set of all roots of $D_\mu$ for a given $\mu\in\{1,\dots,d\}$ and by $S \underset{def}{=}\bigcup_{\mu=1}^d S_\mu$, and call them the \textit{Bethe roots}. We will moreover generically denote $\mu_s$, for any root $s\in S$, the sheet index such that there exists exactly two functions $Y_{\mu_s}$ and $Y_{\mu_s+1}$, of which $s$ is a pole.
\end{definition}

\begin{definition}{Recursion kernel}\\
Let us define the \textit{recursion kernel} $K_\mu(\overset{i_0}{x_0},x)$ as the solution of the following differential equation which is analytic at the zero $x=s$ of $D_\mu$ ($\mu_s=\mu$):

\begin{eqnarray}
\left(Y_{\mu+1}(\widetilde x)- Y_{\mu}(\widetilde x)+
Q \partial_x \right) \, K_{\mu}(\overset{i_0}{x_0},\widetilde x)
= \frac 12 \left( G(\overset{i_0}{x_0},\overset{\mu+1}{x})-G(\overset{i_0}{x_0},\overset{\mu}{x})\right)
\end{eqnarray}

Equivalently, replacing the expressions for the $Y_\mu$'s yield

\begin{eqnarray}
\left(- 2\frac{Q\partial D_{\mu}}{D_{\mu}}+\frac{Q\partial D_{\mu+1}}{D_{\mu+1}}+\frac{Q\partial D_{\mu-1}}{D_{\mu-1}}+ Q\partial_x\right) \,K_{\mu}(\overset{i_0}{x_0},\widetilde x)
= \frac 12\left(G(\overset{i_0}{x_0},\overset{\mu+1}{x})-G(\overset{i_0}{x_0},\overset{\mu}{x})\right)\nonumber\\
\end{eqnarray}

\end{definition}

\begin{remark}
Such a solution that is analytic at $s$ does not necessarily exists and this is a consequence of the existence of a loop insertion operator in the conformal field theory and of the differential Hirota identities satisfied by the $D_\mu$'s. We postpone the proof of this claim to the appendix but drop the upper-script in $\widetilde x$ to simply write $x$.
\end{remark}

\begin{remark}
The function $K_\mu$ thus defined is not unique as one can add $f(\overset{i_0}{x_0}) \frac{D_\mu(x)^2}{D_{\mu-1}(x)D_{\mu+1}(x)}$ for any function $f$. As we shall see, our main theorem is independent of such a choice.
\end{remark}

\subsection{Recursion}

\begin{theorem}{Reconstruction by topological recursion}\\
We have the topological recursion
\begin{eqnarray}
W_{n+1}^{(g)}(\overset{i_0}{x},X)
&=& \sum_\mu \frac{1}{2\pi i} \oint_{S_\mu}  K_{\mu}(\overset{i_0}{x_0},x)\, \Big(
 W_{n+2}^{(g-1)}(\overset{\mu+1}{x},\overset{\mu}{x},X)  \nonumber \\
 &&\qquad  + \sum'_{\underset{h+h'=g}{I\sqcup I'=X}}
  W_{1+\# I}^{(h)}(\overset{\mu+1}{x},I)\,W_{1+\# I'}^{(h')}(\overset{\mu}{x},I') \Big) \end{eqnarray}
where $\oint_{S_\mu}$ means integrating along a contour that surrounds all Bethe roots in $S_\mu$ but not any of the other Bethe roots $S-S_\mu$. 
When there is a finite number of roots or when the sum $\sum_{s\in S}$ can be  defined, we may write
\begin{eqnarray}
W_{n+1}^{(g)}(\overset{i_0}{x},X)
&=& \sum_{s\in S} \Res_{x\to s} K_{\mu_s}(\overset{i_0}{x_0},x)\, \Big(
 W_{n+2}^{(g-1)}(\overset{\mu_s+1}{x},\overset{\mu_s}{x},X)  \nonumber \\
 &&\qquad  + \sum'_{\underset{h+h'=g}{I\sqcup I'=X}}
  W_{1+\# I}^{(h)}(\overset{\mu_s+1}{x},I)\,W_{1+\# I'}^{(h')}(\overset{\mu_s}{x},I') \Big)
\end{eqnarray}
thus taking the form of the topological recursion for classical spectral curves defined as finite coverings and whose branch-points are analogous to the Bethe roots.
\end{theorem}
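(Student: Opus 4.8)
The plan is to transpose the Eynard--Orantin proof of the topological recursion to the present non-commutative setting, the three inputs being the pole structure of the correlators $W_{n+1}^{(g)}$, the Cauchy-kernel role of $G$, and the linear and quadratic loop equations just established. First I would show, by the same induction on $2g-2+n$ that underlies those loop equations, that for $(g,n+1)$ in the stable range the map $x\mapsto W_{n+1}^{(g)}(\overset ix,X)$ is meromorphic on the punctured sphere with poles only at the Bethe roots $S$, no residue at the punctures $z_j$, no pole at a collision of $x$ with a point of $X$, and appropriate decay at $x=\infty$. The linear loop equation $\sum_iW_{n+1}^{(g)}(\overset ix,X)=0$ then shows that near a root $s\in S_\mu$ only the two adjacent sheets contribute and that $W_{n+1}^{(g)}(\overset\mu x,X)+W_{n+1}^{(g)}(\overset{\mu+1}x,X)$ is regular there; equivalently the singular part at $s$ is odd under the local swap $\mu\leftrightarrow\mu+1$.

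Next I would use the third-kind form $G$ as a Cauchy kernel. Since $\partial_xG(\overset{i_0}{x_0},\overset{i_0}{x})=B(\overset{i_0}{x_0},\overset{i_0}{x})=W_{0,2}$ has a double pole of unit biresidue on the diagonal and no other singularity, $G(\overset{i_0}{x_0},\overset{i_0}{x})$ has a simple pole at $x=x_0$, so (up to the normalisation of $B$) $W_{n+1}^{(g)}(\overset{i_0}{x_0},X)=\Res_{x\to x_0}G(\overset{i_0}{x_0},\overset{i_0}{x})\,W_{n+1}^{(g)}(\overset{i_0}{x},X)$. The vanishing of the total residue on the sphere --- legitimate thanks to the control at the punctures and at infinity from the first step --- then lets me deform the contour off $x_0$ and re-express this value as the sum, with opposite sign, of the residues of $G(\overset{i_0}{x_0},\overset{i_0}{x})\,W_{n+1}^{(g)}(\overset{i_0}{x},X)$ at the Bethe roots.

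The heart of the argument localises at each $s\in S_\mu$. Using the oddness from the first step I would symmetrise the integrand over the swap $\mu\leftrightarrow\mu+1$, so that $G(\overset{i_0}{x_0},\overset{i_0}{x})$ is effectively replaced by $\tfrac12\big(G(\overset{i_0}{x_0},\overset{\mu+1}x)-G(\overset{i_0}{x_0},\overset\mu x)\big)$ against the singular part. I would then invoke the quadratic loop equation: its holomorphy, combined with the factorisation $\bold{\mathcal E}=(\widehat y-Y_1)\cdots(\widehat y-Y_d)$ used to distribute the pole at $s$ between sheets $\mu$ and $\mu+1$, yields near $s$ a local relation of the schematic form $(Y_{\mu+1}-Y_\mu+Q\partial_x)\,W_{n+1}^{(g)}(\overset\mu x,X)=-\big(W_{n+2}^{(g-1)}(\overset{\mu+1}x,\overset\mu x,X)+\sum'W^{(h)}W^{(h')}\big)+\text{holomorphic}$. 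Recognising that applying $(Y_{\mu+1}-Y_\mu+Q\partial_x)^{-1}$ to $\tfrac12\big(G(\overset{i_0}{x_0},\overset{\mu+1}x)-G(\overset{i_0}{x_0},\overset\mu x)\big)$ is exactly the defining equation of the kernel $K_\mu$, and integrating the $Q\partial_x$ by parts inside the contour integral so that it falls on $K_\mu$, I would convert the residue at $S_\mu$ into $\frac1{2\pi i}\oint_{S_\mu}K_\mu\big(W_{n+2}^{(g-1)}+\sum'W^{(h)}W^{(h')}\big)$. Summing over $\mu$ gives the first displayed formula; writing each $\oint_{S_\mu}$ as the sum of residues over its individual roots gives the second.

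The main obstacle is precisely the non-commutativity. Where the classical recursion divides by the multiplicative factor $Y_{\mu+1}-Y_\mu$, here one must invert the differential operator $Y_{\mu+1}-Y_\mu+Q\partial_x$, so the residue manipulation hinges on an integration by parts whose boundary terms I must check vanish, and on the existence of a kernel $K_\mu$ analytic at $s$ --- which is not automatic and rests on the differential Hirota identities for the $D_\mu$, the point deferred to the appendix. I must also verify that the outcome is insensitive to the admissible ambiguity $K_\mu\mapsto K_\mu+f(\overset{i_0}{x_0})\,D_\mu^2/(D_{\mu-1}D_{\mu+1})$, i.e. that this term integrates to zero against the quadratic combination around $S_\mu$, and that the linear term $-Q\sum_i(i-1)\partial_xW_{n+1}^{(g)}$ of the quadratic loop equation is entirely absorbed into the $Q\partial_x$ of the kernel rather than generating spurious residues.
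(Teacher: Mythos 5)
Your proposal is correct and follows essentially the paper's own proof, merely run in the opposite direction: the paper starts from the sum of residues at the Bethe roots and simplifies it down to $W^{(g)}_{n+1}(\overset{i_0}{x_0},X)$ by a contour deformation at the end, whereas you start from the Cauchy-type representation of $W^{(g)}_{n+1}(\overset{i_0}{x_0},X)$ via $G$ and deform outward to the Bethe roots. All the key ingredients coincide — the linear loop equation giving regularity of $W^{(g)}_{n+1}(\overset{\mu}{x},X)+W^{(g)}_{n+1}(\overset{\mu+1}{x},X)$ at a root, the quadratic loop equation reduced locally to the operator $Y_{\mu+1}-Y_{\mu}+Q\partial_x$ acting on the antisymmetric combination, integration by parts against the kernel ODE, the simple pole of $G$ at $x_0$, the insensitivity to the kernel ambiguity, and the deferral of the analyticity of $K_\mu$ to the differential Hirota identities in the appendix.
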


\begin{proof}{
Let us compute the expression

\begin{eqnarray}
 \sum_{s\in S} \Res_{x\to s} K_{\mu_s}(\overset{i_0}{x_0},x)\, \Big(
 W_{n+2}^{(g-1)}(\overset{\mu_s+1}{x},\overset{\mu_s}{x},X) 
  + \sum'_{\underset{h+h'=g}{I\sqcup I'=X}}
  W_{1+\# I}^{(h)}(\overset{\mu_s+1}{x},I)\,W_{1+\# I'}^{(h')}(\overset{\mu_s}{x},I') \Big) 
\end{eqnarray}

where $\sum'_{h,h',I,I'}$ means that we exclude the cases $(h=0,I=\emptyset)$ and $(h'=0,I'=\emptyset)$ from the sum.
Let us define the same quantity as the one between parentheses but without the prime symbol $'$ :

\begin{eqnarray}
\mathcal Q_{\mu,\nu} \underset{def}{=} W_{n+2}^{(g-1)}(\overset{\mu}{x},\overset{\nu}{x},X) 
  + \sum_{\underset{h+h'=g}{I\sqcup I'=X}}
  W_{1+\# I}^{(h)}(\overset{\mu}{x},I)\,W_{1+\# I'}^{(h')}(\overset{\nu}{x},I')
\end{eqnarray}

for sheet indices $\mu,\nu\in \{1,\dots,d\}$. We thus have to compute:

\begin{eqnarray}
  \Res_{x\to s} K_{\mu_s}(\overset{i_0}{x_0},x)\, \Big(
 \mathcal Q_{\mu_s,\mu_s+1}\, -W^{(0)}_1(\overset{\mu_s+1}{x}) W^{(g)}_{n+1}(\overset{\mu_s}{x},X) - W^{(0)}_1(\overset{\mu_s}{x}) W^{(g)}_{n+1}(\overset{\mu_s+1}{x},X) \Big) 
\end{eqnarray}

Let us rewrite

\begin{eqnarray}
2 \mathcal Q_{\mu_s,\mu_s+1}
&=& \mathcal Q_{\mu_s+1,\mu_s}+\mathcal Q_{\mu_s,\mu_s+1} \cr
&=& \sum_{i\neq j} \mathcal Q_{i,j} - \sum_{j\neq \mu_s,\mu_s+1} (\mathcal Q_{\mu_s,j}+\mathcal Q_{\mu_s+1,j}) - \sum_{i\neq \mu_s,\mu_s+1} (\mathcal Q_{i,\mu_s}+ \mathcal Q_{i,\mu_s+1})  \cr
&& -  \sum_{i\neq j, \, i\neq \mu_s,\mu_s+1,\, j\neq \mu_s,\mu_s+1} \mathcal Q_{i,j} \cr
&=& 2 \sum_{i<j} \mathcal Q_{i,j} 
-  \sum_{i\neq j, \,(i,j)\neq(\mu_s,\mu_s+1),\, (i,j)\neq(\mu_s+1,\mu_s)} \mathcal Q_{i,j} \cr
&=& 2 P^{(g)}_{n;1}(x,X) + 2Q\sum_j j\,\partial_x\,W_{n+1}^{(g)}(\overset{j}{x},X) -  \sum_{\underset{(i,j)\neq(\mu_s,\mu_s+1),\, (i,j)\neq(\mu_s+1,\mu_s)}{i\neq j}} \mathcal Q_{i,j} \cr
&=& +\ 2Q \mu_s \,\partial_x\,W_{n+1}^{(g)}(\overset{\mu_s}{x},X)+2Q (\mu_s+1) \,\partial_x\,W_{n+1}^{(g)}(\overset{\mu_s+1}{x},X) + {\rm reg.\,at\,}s \cr
\end{eqnarray}

where we used the $i\longleftrightarrow j$ symmetry of the symbol $\mathcal Q_{i,j}$ and the linear loop equation after introducing $P^{(g)}_{n;1}$. Multiplying by the recursion kernel and taking the residue at the Bethe root $s\in S$ then implies

\begin{eqnarray}
&&   \Res_{x\to s} K_{\mu_s}(\overset{i_0}{x_0},x)\, \Big(
 \mathcal Q_{\mu_s,\mu_s+1}\, -W^{(0)}_1(\overset{\mu_s+1}{x}) W^{(g)}_{n+1}(\overset{\mu_s}{x},X) - W^{(0)}_1(\overset{\mu_s}{x}) W^{(g)}_{n+1}(\overset{\mu_s+1}{x},X) \Big) \cr
&=&  \Res_{x\to s} K_{\mu_s}(\overset{i_0}{x_0},x)\, \Big(
-W^{(0)}_1(\overset{\mu_s+1}{x}) W^{(g)}_{n+1}(\overset{\mu_s}{x},X) - W^{(0)}_1(\overset{\mu_s}{x}) W^{(g)}_{n+1}(\overset{\mu_s+1}{x},X) \cr
&& +\ Q \mu_s \,\partial_x\,W_{n+1}^{(g)}(\overset{\mu_s}{x},X)+Q (\mu_s+1) \,\partial_x\,W_{n+1}^{(g)}(\overset{\mu_s+1}{x},X)
 \Big) 
\end{eqnarray}

Since $W^{(g)}_{n+1}(\overset{\mu_s}{x},X)+W^{(g)}_{n+1}(\overset{\mu_s+1}{x},X)$ is analytic at $s$, we may rewrite:

\begin{eqnarray}
&&   \Res_{x\to s} K_{\mu_s}(\overset{i_0}{x_0},x)\, \Big(
 \mathcal Q_{\mu_s,\mu_s+1}\, -W^{(0)}_1(\overset{\mu_s+1}{x}) W^{(g)}_{n+1}(\overset{\mu_s}{x},X) - W^{(0)}_1(\overset{\mu_s}{x}) W^{(g)}_{n+1}(\overset{\mu_s+1}{x},X) \Big) \cr
&=&  \frac12  \Res_{x\to s} K_{\mu_s}(\overset{i_0}{x_0},x)\nonumber\\
&& \qquad\quad\qquad\times\ \Big(
W^{(0)}_1(\overset{\mu_s}{x}) - W^{(0)}_1(\overset{\mu_s+1}{x})
 + Q \,\partial_x \Big)  \left(W^{(g)}_{n+1}(\overset{\mu_s}{x},X) - W^{(g)}_{n+1}(\overset{\mu_s+1}{x},X) \right) \nonumber\\
\end{eqnarray}

Integrating by parts then yields 

\begin{eqnarray}
&&   \Res_{x\to s} K_{\mu_s}(\overset{i_0}{x_0},x)\, \Big(
 \mathcal Q_{\mu_s,\mu_s+1}\, -W^{(0)}_1(\overset{\mu_s+1}{x}) W^{(g)}_{n+1}(\overset{\mu_s}{x},X) - W^{(0)}_1(\overset{\mu_s}{x}) W^{(g)}_{n+1}(\overset{\mu_s+1}{x},X) \Big) \cr
&=&  -  \Res_{x\to s} \left(W^{(g)}_{n+1}(\overset{\mu_s}{x},X)-W^{(g)}_{n+1}(\overset{\mu_s+1}{x},X)\right)\, \Big(
W^{(0)}_1(\overset{\mu_s+1}{x}) - W^{(0)}_1(\overset{\mu_s}{x})
 + Q \,\partial_x \Big) \,K_{\mu_s}(\overset{i_0}{x_0},x)\nonumber\\
\end{eqnarray}

and using the defining differential equations of each $K_{\mu_s}$ we get 

\begin{eqnarray}
&&   \Res_{x\to s} K_{\mu_s}(\overset{i_0}{x_0},x)\, \Big(
 \mathcal Q_{\mu_s,\mu_s+1}\, -W^{(0)}_1(\overset{\mu_s+1}{x}) W^{(g)}_{n+1}(\overset{\mu_s}{x},X) - W^{(0)}_1(\overset{\mu_s}{x}) W^{(g)}_{n+1}(\overset{\mu_s+1}{x},X) \Big) \cr
&=&  - \frac 12  \Res_{x\to s} \left(W^{(g)}_{n+1}(\overset{\mu_s}{x},X)-W^{(g)}_{n+1}(\overset{\mu_s+1}{x},X)\right) \left(G(\overset{i_0}{x_0},\overset{\mu_s+1}{x})-G(\overset{i_0}{x_0},\overset{\mu_s}{x})\right) .
\end{eqnarray}

This implies that we have
\begin{eqnarray}
&& 2  \sum_\mu \oint_{S_\mu}  K_{\mu}(\overset{i_0}{x_0},x)\, \Big(
 \mathcal Q_{\mu,\mu+1}\, -W^{(0)}_1(\overset{\mu+1}{x}) W^{(g)}_{n+1}(\overset{\mu}{x},X) - W^{(0)}_1(\overset{\mu}{x}) W^{(g)}_{n+1}(\overset{\mu+1}{x},X) \Big) \nonumber \\
&=& \sum_\mu   \oint_{S_\mu} W^{(g)}_{n+1}(\overset{\mu}{x},X) G(\overset{i_0}{x_0},\overset{\mu}{x}) 
 + \sum_\mu   \oint_{S_\mu} W^{(g)}_{n+1}(\overset{\mu+1}{x},X) G(\overset{i_0}{x_0},\overset{\mu+1}{x}) \\
&&  - \sum_\mu  \oint_{S_\mu} W^{(g)}_{n+1}(\overset{\mu}{x},X) G(\overset{i_0}{x_0},\overset{\mu+1}{x}) 
  - \sum_\mu  \oint_{S_\mu} W^{(g)}_{n+1}(\overset{\mu+1}{x},X) G(\overset{i_0}{x_0},\overset{\mu}{x}) .
\end{eqnarray}

Now we notice that $W^{(g)}_{n+1}(\overset{\mu}{x},X)$ has poles at the points of $S_\mu$ and $S_{\mu-1}$, and $G(\overset{i_0}{x_0},\overset{\mu}{x})$ has poles at $x=x_0$ (simple pole with residue $-\delta_{i_0,\mu}$), and at the points of $S_\mu$ and $S_{\mu-1}$.
Moving the integration contours in the second line yields
\begin{eqnarray}
&& 2  \sum_\mu \oint_{S_\mu}  K_{\mu}(\overset{i_0}{x_0},x)\, \Big(
 \mathcal Q_{\mu,\mu+1}\, -W^{(0)}_1(\overset{\mu+1}{x}) W^{(g)}_{n+1}(\overset{\mu}{x},X) - W^{(0)}_1(\overset{\mu}{x}) W^{(g)}_{n+1}(\overset{\mu+1}{x},X) \Big) \nonumber\\
&=& 2\pi i \sum_\mu   \delta_{\mu,i_0} W^{(g)}_{n+1}(\overset{\mu}{x_0},X) +\delta_{\mu+1,i_0} W^{(g)}_{n+1}(\overset{\mu+1}{x_0},X) \nonumber \\
&& - \sum_\mu \oint_{S_{\mu-1}} W^{(g)}_{n+1}(\overset{\mu}{x},X) G(\overset{i_0}{x_0},\overset{\mu}{x}) 
 - \sum_\mu   \oint_{S_{\mu+1}} W^{(g)}_{n+1}(\overset{\mu+1}{x},X) G(\overset{i_0}{x_0},\overset{\mu+1}{x})\nonumber  \\
&&  - \sum_\mu  \oint_{S_\mu} W^{(g)}_{n+1}(\overset{\mu}{x},X) G(\overset{i_0}{x_0},\overset{\mu+1}{x}) 
  - \sum_\mu  \oint_{S_\mu} W^{(g)}_{n+1}(\overset{\mu+1}{x},X) G(\overset{i_0}{x_0},\overset{\mu}{x}) \\
&=& 4\pi i \  W^{(g)}_{n+1}(\overset{i}{x_0},X)  \nonumber \\
&& - \sum_\mu \oint_{S_{\mu}} \left( W^{(g)}_{n+1}(\overset{\mu}{x},X) +W^{(g)}_{n+1}(\overset{\mu+1}{x},X) \right) \ \left( G(\overset{i_0}{x_0},\overset{\mu}{x}) + G(\overset{i_0}{x_0},\overset{\mu+1}{x})  \right).
\end{eqnarray}
The last term vanishes because it has no pole at $S_\mu$ from the linear loop equation thus proving the theorem.
}\end{proof}

\begin{remark}
The parameters of the conformal block although absent of this computation, are encoded in the quantum spectral curve that serves as input for the topological recursion, itself being a universal recursive procedure. Indeed, the insertion points of vertex operators and the corresponding charges appear in the position of singularities of the differential operator and in the corresponding monodromy of its solutions.
\end{remark}

\subsection{Special geometry and free energies}

The topological recursion procedure applied to a quantum spectral curve allows for the perturbative reconstruction of solutions to the generalized Ward identities associated to the algebra $\mathcal W({\mathfrak{sl}_d})$. They are conformal blocks with current insertions and our last step will be to remove these currents to obtain chiral correlation functions of the form $\left\langle \prod_{j=1}^M V_{\alpha_j}(z_j)\right\rangle$ viewed as functions on the moduli space of all such W-symmetric conformal field theories on the Riemann sphere

\begin{eqnarray}
\mathcal M_{\mathcal W({\mathfrak{sl}_d})}^{\mathbb P^1}\underset{def}{=}\left\{\left(\{(z_j,\alpha_j)\in \mathbb C \mathbb P^1\times\mathfrak h^*\}_{1\leq j\leq M}, \bold t\right)\right\}
\end{eqnarray}

where the \textit{times} $\bold t\underset{def}{=}\{t_0,t_1,t_2,\dots\}$ generically denote the moduli of the potential $V\underset{def}{=}\sum_{k=0}^\infty t_k x^k$ such that $\partial V=W_{0,1}+Y$. Let us now describe how special geometry, taking the form of Seiberg-Witten relations \cite{CER}, allow us to carry on our computation.

\begin{definition}{$\tau$-function of the theory}\\
Define the $\tau$-function $\mathfrak T_{\bold z, \bold \alpha, \bold t }$ associated to the W-symmetric conformal field theory as the function on the moduli space $\mathcal M_{\mathcal W({\mathfrak{sl}_d})}^{\mathbb P^1}$ by 

\begin{eqnarray}
\ln\, \mathfrak T_{\bold z, \bold \alpha, \bold t }\underset{def}{=}\sum_{g=0}^\infty \varepsilon^{2g-2} \mathcal F_g 
\end{eqnarray}

where the genus $g$ free energy $\mathcal F_g$ is such that for any deformation $\delta\in T^*\mathcal M_{\mathcal W({\mathfrak{sl}_d})}^{\mathbb P^1}$, we have the special geometry relations 

\begin{eqnarray}
\delta\mathcal F_g=\int_{\delta^*} W_{g,1} \qquad \text{and} \qquad \delta W_{g,n} = \int_{\delta^*} W_{g,n+1}
\end{eqnarray}

for any $g,n\in\mathbb N$, $n\neq 0$, where we introduced a $generalized\ cycle$ $\delta^*$ dual to the deformation and defined by the requirement

\begin{eqnarray}
\delta W_{0,1} = \int_{\delta^*} W_{0,2}
\end{eqnarray}

\end{definition}

This needs a systematic definition and study of deformations and associated cycles that should generalize the one introduced in \cite{CEM} and \cite{CER}. In particular, the additional choices that have to be made to define form-cycle type dualities, such as pair of point decomposition of punctured surfaces, should correspond to parameters of bases of conformal blocks. The natural conjecture

\begin{conjecture}{$\tau$-function as conformal block}\\
\begin{eqnarray}
\mathfrak T_{\bold z, \bold \alpha, \bold t}=\left\langle \prod_{j=1}^M V_{\alpha_j}(z_j)\right\rangle
\end{eqnarray}
\end{conjecture}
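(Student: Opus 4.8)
The plan is to treat both sides of the asserted identity as functions on the moduli space $\mathcal M_{\mathcal W({\mathfrak{sl}_d})}^{\mathbb P^1}$ and to show they obey the \emph{same} closed system of first-order deformation equations, so that they coincide up to one multiplicative constant per genus, which is then fixed in a degenerate limit. Writing the chiral correlator as $Z_{\bold z,\bold\alpha,\bold t}\underset{def}{=}\left\langle \prod_{j=1}^M V_{\alpha_j}(z_j)\right\rangle$ and assuming for it a topological expansion of the same type as the current insertions, I would define the conformal free energies $\mathcal F_g^{\rm CFT}$ by $\ln Z_{\bold z,\bold\alpha,\bold t}=\sum_{g=0}^\infty\varepsilon^{2g-2}\mathcal F_g^{\rm CFT}$. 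The conjecture is then equivalent to $\mathcal F_g^{\rm CFT}=\mathcal F_g$ for every $g$, where $\mathcal F_g$ is the topological-recursion free energy fixed by the special geometry relations $\delta\mathcal F_g=\int_{\delta^*}W_{g,1}$.

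The core of the argument is to translate each elementary deformation into a current or stress-energy insertion. First I would treat the position moduli: by conformal covariance $\partial_{z_j}\ln Z$ equals the one-point insertion of $L_{-1}$ on $V_{\alpha_j}$, which by Axiom 3 is the residue at $z_j$ of $\langle\langle T(x)\rangle\rangle$; using $\bold W^{(2)}\propto T$ together with the quadratic loop equation this expresses $\partial_{z_j}\ln Z$ through $W_{g,1}$ and $W_{g,2}$ contracted at the puncture, with $W_{0,2}=B$ supplying the two-point kernel. Next I would treat the charge moduli: in the free-field realization $\mathcal W({\mathfrak{sl}_d})\subset\overline{\mathcal U}(\widehat{\mathfrak h})$ the vertex operators are exponentials of the bosonized current, so $\partial_{\alpha_j}\ln Z$ produces the insertion of a primitive of $\bold J$ at $z_j$, whose topological expansion is again governed by $W_{g,1}$ through a period ending at the puncture. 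Finally the times $\bold t$ are handled by the loop insertion operator: since $\partial V=W_{0,1}+Y$, the variation $\delta_{t_k}$ inserts $x^k$ against $W_{g,1}$ along a contour at infinity. In each case the outcome has the shape $\delta\mathcal F_g^{\rm CFT}=\int_{\delta^*}W_{g,1}$, matching the defining relation for $\mathcal F_g$ once $\delta^*$ is identified with the cycle produced by the insertion.

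The main obstacle, and the reason the statement is a conjecture rather than a theorem, is to make the form-cycle duality $\delta\mapsto\delta^*$ rigorous and uniform across all three families of deformations. Concretely one must construct the generalized cycles $\delta^*$ on the quantum spectral curve, verify the defining compatibility $\delta W_{0,1}=\int_{\delta^*}W_{0,2}$, and check that the higher relations $\delta W_{g,n}=\int_{\delta^*}W_{g,n+1}$ propagate consistently through the topological recursion established earlier. This demands a quantum analogue of the Riemann bilinear identities and of the Eynard--Orantin variational formulas, generalizing the $\mathcal W({\mathfrak{sl}_2})$ constructions of \cite{CEM} and \cite{CER}; in particular one must show that the contour integrals arising from the puncture insertions above coincide with the homology pairings used to define $\delta^*$, and that the regularized periods are independent of the auxiliary choices (base point $o$, Torelli marking, and the ambiguities in $G$ and in the recursion kernel noted previously).

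Granting this duality, the two families $\{\mathcal F_g^{\rm CFT}\}$ and $\{\mathcal F_g\}$ satisfy the same first-order equations on the connected moduli space and hence differ at each genus only by a constant. I would fix these constants by evaluating both sides in a controlled degeneration, such as a collision of punctures or a free-field point where the correlator is Gaussian and $\ln Z$ is the explicit genus-zero prepotential, thereby upgrading the order-by-order matching to the full identity to all orders in $\varepsilon$.
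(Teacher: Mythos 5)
There is a basic point to make first: the paper does not prove this statement at all --- it is explicitly a conjecture, and what the paper supplies is only a formal motivation, namely the computation of $\partial_{z_i}\ln\mathfrak T$ and $\partial_{\alpha_i}\ln\mathfrak T$ in the free-field realization $V_{\alpha_j}=\,:\exp\left(\left(\alpha_j,\varphi_{|\mathfrak h_j}\right)\right):$, which yields the regularized evaluation $\underset{x=z_i}{\mathrm{ev}}^{reg}\left(\alpha_i,{W_1}_{|\mathfrak h_i}\right)$ and the regularized period $\int^{z_i}_{reg}W_1(x)_{|\mathfrak h_i}\,dx$, i.e.\ special-geometry relations with respect to the position and charge moduli only. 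Your proposal follows essentially this same route (match first-order deformation equations on the moduli space, then integrate), with one cosmetic difference: you extract the $z_j$-derivative from an insertion of $T$ via $L_{-1}$ and the quadratic loop equation, whereas the paper differentiates the exponential vertex operator directly, producing a single current insertion contracted with $\alpha_i$, which is simpler and needs no loop equation at that stage.

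However, what you have written is a proof program, not a proof, and the steps you defer are precisely the ones that make the statement conjectural. Concretely: (i) the construction of the generalized cycles $\delta^*$ and the verification that $\delta W_{g,n}=\int_{\delta^*}W_{g,n+1}$ holds uniformly for all deformations and propagates consistently through the topological recursion is not carried out, either by you or by the paper (which says this ``needs a systematic definition and study of deformations and associated cycles''); (ii) your treatment of the time deformations $\delta_{t_k}$ as loop insertions against $W_{g,1}$ along a contour at infinity is exactly the point the paper flags as open (``the dependence in the times $\mathbf t$ is yet to be understood''), so it cannot be invoked as an established step; (iii) you silently assume that $\ln Z$ itself admits an $\varepsilon$-expansion of the form $\sum_g\varepsilon^{2g-2}\mathcal F_g^{\rm CFT}$ --- this is an extra hypothesis of the same nature as the topological-regime assumption, not a consequence of the axioms; and (iv) the final constant-fixing step by a degeneration (collision of punctures, or a Gaussian point) is asserted without any argument that both sides remain in the topological regime and stay finite there; for generic background charge $Q$ the correlator is singular precisely when insertion points collide, so this limit needs its own regularization. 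In short, your proposal correctly reproduces the paper's motivating computation and correctly locates the obstruction, but it does not close the gap; as written it establishes nothing beyond what the paper already does, and it should not be presented as a proof of the conjecture.
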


would then go one step further in performing the conformal bootstrap of W-symmetric conformal field theories. Indeed, the remaining problem would then be to recollect these conformal blocks into single-valued smooth correlation functions. 

It is still a conjecture as one would need to examine all possible directions of deformation and at the moment, the dependence in the times $\bold t$ is yet to be understood (in relation with the determination of bases of Toda conformal blocks \cite{CPT}).

To motivate it, let us examine formally how the chiral conformal block $\langle \prod_{j=1}^M V_{\alpha_j}(z_j)\rangle$ deforms under variations of the moduli corresponding to the position of the inserted vertex operators and corresponding charges. Recall that the vertex operators of Toda conformal field theory are taken to be W-algebra primaries with respect to the W-algebra and are related to the defining chiral current by

\begin{eqnarray}
V_{\alpha_j}(z_j)\underset{def}{=}: \exp \left(\left(\alpha_j,\varphi_{|\mathfrak h_j}\right)\right):
\end{eqnarray}

in terms of the corresponding free bosons $\varphi_{|\mathfrak h_j}\underset{def}{=}\left(\int^{z_j}\bold J(x\cdot E)dx\right)_{E\in\mathfrak h_j}$. This implies the derivatives

\begin{eqnarray}
\frac{\partial}{\partial z_i}V_{\alpha_j}(z_j) &=&\delta_{i,j} : \left(\alpha_j,\bold J(z_j)\right) V_{\alpha_j}(z_j):\\
\text{and}\qquad\frac{\partial}{\partial \alpha_i}V_{\alpha_j}(z_j) &=&\delta_{i,j} :  \varphi_{|\mathfrak h_j} V_{\alpha_j}(z_j): 
\end{eqnarray}

immediately yielding

\begin{eqnarray}
\frac{\partial}{\partial z_i} \ln \mathfrak T_{\bold z, \bold \alpha, \bold t} &=&  \frac{\left\langle :(\alpha_i,\bold J(z_i))V_{\alpha_i}(z_i): \prod_{\underset{j\neq i}{j=1}}^M V_{\alpha_j}(z_j)\right\rangle}{\left\langle\prod_{j=1}^M V_{\alpha_j}(z_j)\right\rangle}\\
&\underset{def}{=}&  \underset{x=z_i}{\text{ev}}^{reg} \left(\alpha_i,{W_1}_{|\mathfrak h_i}\right)\\
\text{and}\qquad \frac{\partial}{\partial\alpha_i}\ln \mathfrak T_{\bold z, \bold \alpha, \bold t} &=&  \frac{\left\langle :\varphi_{|\mathfrak h_i}V_{\alpha_i}(z_i): \prod_{\underset{j\neq i}{j=1}}^M V_{\alpha_j}(z_j)\right\rangle}{\left\langle\prod_{j=1}^M V_{\alpha_j}(z_k)\right\rangle}\\
&\underset{def}{=}& \int^{z_i}_{reg} W_1(x)_{|\mathfrak h_i}dx
\end{eqnarray}

where the regularized linear evaluation operator and the integral are defined in the only natural way by taking the normal ordered product of the operators located at coinciding insertion points. The symbol ${W_1}_{|\mathfrak h_i}$ appearing in both relations is defined as the image of $W_1(x)\underset{def}{=}\sum_{a=1}^d W(\overset{a}{x})h_a\in\mathfrak h^*$ by the ismorphism $\mathfrak h^*\simeq \mathfrak h_i^*$.

Moreover, the wave function reconstructed from topological recursion applied to the quantum spectral curve of the $\mathcal W({\mathfrak{sl}_d})$-symmetric conformal field theory would then be defined as

\begin{eqnarray}
\Psi(D;\varepsilon)\underset{def}{=} \exp\left(\sum_{n=0}^\infty\sum_{g=0}^\infty \frac{\varepsilon^{2g-2+n}}{n!}\int_D\cdots\int_D W_{g,n} \diff x_1\cdots \diff x_n\right)
\end{eqnarray}

and is expected to be related to correlation functions of the theory with insertions of degenerate fields, in the sense that it should satisfy a KZ type equation.

\section{Conclusion}

We have proved that the Ward identities satisfied by chiral correlation functions of $\mathcal W({\mathfrak{sl}_d})$-symmetric conformal field theories can be solved perturbatively using a generalization of the topological recursion of \cite{BEO},\cite{EO} that applies to quantum curves. This can also be viewed as a quantization of the setup of \cite{BEM1},\cite{BEM2} and\cite{BER1} and this corresponds to having $Q\neq 0$ in the conformal field theory \cite{DV}. In turn, this yielded a conjecture on how to compute the topological (heavy-limit) expansion of the chiral $M$-point functions of the theory. The sequel to this article will provide a matrix model realization of this theory and later on a generalization of the scheme to arbitrary Riemann surfaces.

\section*{Aknowledgements}

BE was supported by the ERC Starting Grant no. 335739 “Quantum fields and knot homologies” funded by the European Research Council under the European Union’s Seventh Framework Programme. BE is also partly supported by the ANR grant Quantact: ANR-16-CE40-0017. 
We thank S. Ribault, J. Hurtubise and I. Runkl for fruitful discussions.

\section{Appendix}

\subsection{Loop insertion operator}

As is customary in conformal field theory, one can define a linear operator acting on the space of chiral correlation functions with current insertions consisting in the insertion of an additional current at a given point on the surface and in a given quantum sheet.

\begin{definition}{Loop insertion operator}\\
Denoting $\Sigma'\underset{def}{=}\Sigma-\{z_1,\dots,z_M\}$ and by $\widetilde\Sigma'$ its universal covering, for any $n\in\mathbb N$, any $\widetilde x_1,\dots,\widetilde x_n\in\widetilde\Sigma'$ and any sheet indices $i_1,\dots,i_n\in\{1,\dots,d\}$, define the action of the loop insertion operator at the point $\widetilde x\in\widetilde\Sigma'$ in sheet $i\in\{1,\dots,d\}$, denoted $\delta_{\widetilde x}^i$, by

\begin{eqnarray}
\delta_{\widetilde x}^{i}\cdot\big\langle\big\langle \bold J_{i_1}(\widetilde x_1)\cdots \bold J_{i_n}(\widetilde x_n)\big\rangle\big\rangle \underset{def}{=} \big\langle\big\langle \bold J_i(\widetilde x)\bold J_{i_1}(\widetilde x_1)\cdots \bold J_{i_n}(\widetilde x_n)\big\rangle\big\rangle
\end{eqnarray}
\end{definition}

\begin{remark}
This definition of a linear loop insertion operator has no reason in general to commute with the infinite sums appearing in the topological expansions of the chiral correlation functions with current insertions. We will however assume this fact to hold such that it satisfies
\begin{eqnarray}
\delta_{\widetilde x}^i\cdot \omega_{g,n}(\overset{i_1}{x_1},\dots,\overset{i_n}{x_n})=
\omega_{g,n+1}(\overset{i}{x},\overset{i_1}{x_1},\dots,\overset{i_n}{x_n})
\end{eqnarray}
for any generic values of the arguments. In particular, 
\begin{eqnarray}
\delta_{\widetilde x_1}^{i_1}\cdot Y(\overset{i_2}{ x_2})=B(\overset{i_1}{ x_1},\overset{i_2}{ x_2})
\end{eqnarray}
\end{remark}

\subsection{Differential Hirota identities}

The factors $Y_\mu$ appearing in the factorization of the quantum spectral curve are expressed as logarithmic derivatives $Y_\mu=Q\partial \ln \frac{D_{\mu-1}}{D_\mu}$ where $D_\mu$ is the principal minor

\begin{eqnarray}
D_\mu=\begin{pmatrix}
\psi_1 &  \cdots &  \psi_\mu  \cr
 (-Q\partial)\psi_1  & \cdots &  (-Q\partial)\psi_\mu \cr
 (-Q\partial)^2\psi_1 &  \cdots &  (-Q\partial)^2\psi_\mu  \cr
\vdots   &  & \vdots  \cr
(-Q\partial)^{\mu-1}\psi_1  & \cdots &  (-Q\partial)^{\mu-1}\psi_\mu \cr
\end{pmatrix}, \qquad \mu=1,\dots,d
\end{eqnarray}

\begin{proposition}{Differential Hirota identities}\\
For any quantum sheet index $\mu\in\{1,\dots,d\}$,

\begin{eqnarray}
\frac{D_{\mu-1}D_{\mu+1}}{D_\mu ^2}&=&\frac{\diff}{\diff x}\left(\frac{\widehat D_\mu}{D_\mu}\right)
\end{eqnarray}

where $\widehat D_\mu$ is the determinant used to compute $D_\mu$ but with $\psi_\mu$ replaced by $\psi_{\mu+1}$ in the last column.
\end{proposition}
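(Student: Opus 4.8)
The plan is to treat the $D_\mu$ as Wronskian-type determinants for the derivation $-Q\partial$ and to obtain the identity from the Desnanot--Jacobi (Dodgson) relation combined with the quotient rule. Write $D_\mu=\det\big[(-Q\partial)^{i-1}\psi_j\big]_{1\le i,j\le\mu}$, and let $D_\mu^{\sharp}$ denote the ``bumped'' minor obtained from $D_\mu$ by advancing its bottom row $(-Q\partial)^{\mu-1}$ to $(-Q\partial)^{\mu}$, and $\widehat D_\mu^{\sharp}$ the corresponding object with $\psi_\mu$ replaced by $\psi_{\mu+1}$ in the last column. The first step is to differentiate $D_\mu$: since $\diff/\diff x$ distributes over the rows and each term in which an interior row is advanced duplicates the row just below it and hence vanishes, only the bottom row survives, so that $\diff D_\mu/\diff x=D_\mu^{\sharp}$ and likewise $\diff\widehat D_\mu/\diff x=\widehat D_\mu^{\sharp}$. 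The quotient rule then gives
\begin{equation}
\frac{\diff}{\diff x}\left(\frac{\widehat D_\mu}{D_\mu}\right)=\frac{D_\mu\,\widehat D_\mu^{\sharp}-\widehat D_\mu\,D_\mu^{\sharp}}{D_\mu^{2}},
\end{equation}
reducing the problem to the evaluation of the bilinear numerator.

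The core step is to apply the Desnanot--Jacobi identity to the $(\mu+1)\times(\mu+1)$ matrix $M$ whose columns are $\psi_1,\dots,\psi_{\mu+1}$ and whose rows are $(-Q\partial)^{0},\dots,(-Q\partial)^{\mu}$, so that $\det M=D_{\mu+1}$, distinguishing the two bottom rows $\mu,\mu+1$ and the two rightmost columns $\mu,\mu+1$. In the notation $M^{a}_{c}$ (delete row $a$, column $c$) and $M^{a,b}_{c,d}$ (delete rows $a,b$ and columns $c,d$), the five relevant minors are identified as $M^{\mu+1}_{\mu+1}=D_\mu$, $M^{\mu+1}_{\mu}=\widehat D_\mu$, $M^{\mu}_{\mu+1}=D_\mu^{\sharp}$, $M^{\mu}_{\mu}=\widehat D_\mu^{\sharp}$ and $M^{\mu,\mu+1}_{\mu,\mu+1}=D_{\mu-1}$, the last using the convention $D_0=1$. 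The identity $\det M\,\det M^{\mu,\mu+1}_{\mu,\mu+1}=\det M^{\mu}_{\mu}\,\det M^{\mu+1}_{\mu+1}-\det M^{\mu}_{\mu+1}\,\det M^{\mu+1}_{\mu}$ then reads $D_{\mu+1}D_{\mu-1}=D_\mu\,\widehat D_\mu^{\sharp}-\widehat D_\mu\,D_\mu^{\sharp}$, which is exactly the numerator above. Dividing by $D_\mu^{2}$ produces the asserted identity. The case $\mu=1$ reduces, via $D_0=1$, to the elementary $\diff(\psi_2/\psi_1)/\diff x=D_2/\psi_1^{2}$ and can serve as the base check.

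The main obstacle is purely combinatorial: making the two bookkeeping steps airtight, namely that differentiation collapses $D_\mu$ to the single bumped minor with the correct row advanced, and that the five submatrices of $M$ coincide on the nose with $D_\mu,\widehat D_\mu,D_\mu^{\sharp},\widehat D_\mu^{\sharp},D_{\mu-1}$. I would handle this by always deleting the two bottom rows and the two rightmost columns, so that the sign in Desnanot--Jacobi is the built-in one and no row or column permutation is required. A final point, anticipated in the remark preceding the statement, is that the $\psi_j$ are only multivalued solutions of $\psi\cdot\bold{\mathcal E}=0$; since every step is a polynomial determinant identity holding pointwise on the universal cover, and the monodromies cancel in the ratios $\widehat D_\mu/D_\mu$ and $D_{\mu-1}D_{\mu+1}/D_\mu^{2}$, the identity descends to $\Sigma-\{z_1,\dots,z_M\}$, in line with the single-valuedness of the factors $Y_\mu$.
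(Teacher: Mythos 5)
Your proposal is correct and is essentially the paper's own proof: the paper likewise uses the quotient rule to reduce the statement to $D_{\mu-1}D_{\mu+1}=\widehat D_\mu'\,D_\mu-\widehat D_\mu\,D_\mu'$ and then invokes precisely the Desnanot--Jacobi relation on the $(\mu+1)\times(\mu+1)$ Wronskian-type matrix (presented there only pictorially), which you spell out minor by minor together with the row-bumping computation of the derivatives. One cosmetic caveat, shared with the paper: since the rows are built from $(-Q\partial)^i$ rather than $\partial^i$, differentiation advances the bottom row only up to a factor $-1/Q$, so the displayed identity actually holds up to an overall constant $-Q$ --- harmless for its later use in the Bethe-equations argument, where only the fact that $D_{\mu-1}D_{\mu+1}/D_\mu^2$ is a total derivative matters.
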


\begin{proof}{
This is a straightforward computation. Indeed, the identity of the proposition is equivalent to 
\begin{eqnarray}
D_{\mu-1}D_{\mu+1}&=&\widehat D_\mu ' D_\mu-\widehat D_\mu D'_\mu
\end{eqnarray}
which is exactly a well known result from determinant computation in the form

\begin{eqnarray}
\begin{pmatrix}
 &  & * & * \cr
 & & \vdots & \vdots \cr
* & \cdots & * & * \cr
* & \cdots & * & * \cr
\end{pmatrix} \times \begin{pmatrix}
 *& \cdots &* \cr
\vdots & & \vdots \cr
*& \cdots &* \cr
\end{pmatrix}& =& \begin{pmatrix}
&  &  & * \cr
 & &  & \vdots \cr
 &  &  & * \cr
* & \cdots & * & * \cr
\end{pmatrix} \times \begin{pmatrix}
&  & * &  \cr
 & & \vdots &  \cr
* & \cdots & * & * \cr
 &  & * & \cr
\end{pmatrix} \qquad \qquad  \nonumber \\
&&\qquad\qquad  - \begin{pmatrix}
&  &   * & \cr
 & &   \vdots & \cr
 &  &   * &  \cr
* & \cdots & * & * \cr
\end{pmatrix} \times \begin{pmatrix}
&  & *   \cr
 & & \vdots   \cr
* & \cdots & *  \cr
 &  & *  \cr
\end{pmatrix} \nonumber \\
\end{eqnarray}

where this is to be understood as an algebraic relation between determinants of matrices all obtained from the same $\mu+1 \times \mu +1$ (whose determinant is $D_{\mu+1}$, it is the one represented by the second term of the product before the equal sign) by removing pairs of columns and rows. On the picture, the stars $*$ represent columns and/or rows that are removed before taking the determinant.}
\end{proof}

\begin{remark}
This is the differential version of the $QQ$-system of difference relations as appearing in the study of quantum integrable systems.
\end{remark}

\subsection{$K_{\mu}$ has no monodromy around $S_\mu$}

Recall that we defined the kernel $K_\mu$ associated to the set $S_\mu=\{s\in\Sigma'| D_\mu(s)=0\}$ as satisfying the differential equation

\begin{eqnarray}
\left(- 2\frac{Q\partial D_{\mu}}{D_{\mu}}+\frac{Q\partial D_{\mu+1}}{D_{\mu+1}}+\frac{Q\partial D_{\mu-1}}{D_{\mu-1}}+ Q\partial_x\right) \,K_{\mu}(\overset{i_0}{x_0},\widetilde x)
= G(\overset{i_0}{x_0},\overset{\mu+1}{x})-G(\overset{i_0}{x_0},\overset{\mu}{x})\nonumber\\
\end{eqnarray}

As was noted before, this defines $K_\mu$ up to terms of the form $f(\overset{i_0}{x_0}) \frac{D_\mu(x)^2}{D_{\mu-1}(x)D_{\mu+1}(x)}$ that are irrelevent when computing the $W_{g,n}$'s. The generic solution for $K_\mu$ is therefore of the form

\begin{eqnarray}
K_\mu(\overset{i_0}{x_0},\widetilde x)&=&\frac{D_\mu(\widetilde x)^2}{D_{\mu-1}(\widetilde x)D_{\mu+1}(\widetilde x)}\int^{\widetilde x} \frac{D_{\mu-1}(\widetilde x')D_{\mu+1}(\widetilde x')}{D_\mu(\widetilde x')^2}\left( G(\overset{i_0}{x_0},\overset{\mu+1}{x'})-G(\overset{i_0}{x_0},\overset{\mu}{x'})\right) \nonumber\\
&&\qquad + \quad f(\overset{i_0}{x_0})\, \, \frac{D_\mu(x)^2}{D_{\mu-1}(x)D_{\mu+1}(x)}
\end{eqnarray}

The choice of function $f$ can be reabsorbed into a $\overset{i_0}{x_0}$-dependent choice of starting point for the integral. The holomorphic condition we wish to impose on $K_\mu$ ensures it can be integrated globally along a path surrounding the Bethe roots lying in $S_\mu$ and none of the others (the ones in $S-S_\mu$). This is equivalent to requiring that the residue of the integrand appearing in the last equality vanishes at any Bethe root $s\in S_\mu$.

\begin{theorem}{Bethe equations}\\
For any generic point $\overset{i_0}{x_0}$ in the quantum covering and any Bethe root $s\in S_\mu$,

\begin{eqnarray}
\bold{BE}_{\mu,s}(\overset{i_0}{x_0})\underset{def}{=}\underset{x=s}{\Res}\left( \frac{D_{\mu-1}(\widetilde x)D_{\mu+1}(\widetilde x)}{D_\mu(\widetilde x)^2}\left( G(\overset{i_0}{x_0},\overset{\mu+1}{x})-G(\overset{i_0}{x_0},\overset{\mu}{x})\right)\right)&=&0
\end{eqnarray}

\end{theorem}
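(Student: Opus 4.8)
The plan is to reduce the vanishing of $\mathbf{BE}_{\mu,s}(\overset{i_0}{x_0})$ to exactly the two structural inputs flagged in the remark preceding the definition of $K_\mu$, namely the loop insertion operator and the differential Hirota identities. Throughout I abbreviate the prefactor as $R:=D_{\mu-1}D_{\mu+1}/D_\mu^2$, so that the object to be killed is $\Res_{x=s}R\,\big(G(\overset{i_0}{x_0},\overset{\mu+1}{x})-G(\overset{i_0}{x_0},\overset{\mu}{x})\big)$. I shall work at a generic Bethe root $s\in S_\mu$, where $s$ is a simple zero of $D_\mu$ and $D_{\mu\pm 1}(s)\neq 0$, so that $R$ has a genuine double pole at $s$ and $\widehat{D}_\mu/D_\mu$ the simple pole required below.

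First I would trade the third-kind form for a logarithmic derivative of the Wronskians. Differentiating in $x$, using $\partial_x G(\overset{i_0}{x_0},\overset{j}{x})=B(\overset{i_0}{x_0},\overset{j}{x})$ together with the loop-insertion identity $B(\overset{i_0}{x_0},\overset{j}{x})=\delta_{\widetilde x_0}^{i_0}\cdot Y(\overset{j}{x})=\delta_{\widetilde x_0}^{i_0}Y_j$, and then the factorization $Y_{\mu+1}-Y_\mu=Q\,\partial_x\ln\!\big(D_\mu^2/(D_{\mu-1}D_{\mu+1})\big)$, one obtains (the variations acting on the independent variable $x_0$ commute with $\partial_x$)
\[
\partial_x\big(G(\overset{i_0}{x_0},\overset{\mu+1}{x})-G(\overset{i_0}{x_0},\overset{\mu}{x})\big)
=\delta_{\widetilde x_0}^{i_0}\big(Y_{\mu+1}-Y_\mu\big)
=Q\,\partial_x\,\delta_{\widetilde x_0}^{i_0}\ln\frac{D_\mu^2}{D_{\mu-1}D_{\mu+1}} .
\]
Integrating in $x$ then gives, with an $x$-independent integration constant $c$,
\[
G(\overset{i_0}{x_0},\overset{\mu+1}{x})-G(\overset{i_0}{x_0},\overset{\mu}{x})=Q\,\Phi+c(\overset{i_0}{x_0}),
\qquad \Phi:=\delta_{\widetilde x_0}^{i_0}\ln\frac{D_\mu^2}{D_{\mu-1}D_{\mu+1}} .
\]
The differential Hirota identity now disposes of both terms. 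Since $\Phi=-\,\delta_{\widetilde x_0}^{i_0}\ln R=-R^{-1}\delta_{\widetilde x_0}^{i_0}R$, the product $R\Phi$ is an \emph{exact} variation, while Hirota exhibits $R$ as an $x$-total derivative:
\[
R\,\Phi=-\,\delta_{\widetilde x_0}^{i_0}R,\qquad R=\frac{D_{\mu-1}D_{\mu+1}}{D_\mu^2}=\partial_x\!\left(\frac{\widehat{D}_\mu}{D_\mu}\right).
\]
Because $R$ is a total derivative, $\Res_{x=s}R=0$, which both kills the constant $c$ and, after pulling $\delta_{\widetilde x_0}^{i_0}$ through the residue, annihilates the remaining piece:
\[
\mathbf{BE}_{\mu,s}(\overset{i_0}{x_0})=\Res_{x=s}R\,(Q\Phi+c)
= -\,Q\,\delta_{\widetilde x_0}^{i_0}\Big(\Res_{x=s}R\Big)+c\,\Res_{x=s}R=0 .
\]

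The main obstacle I expect is justifying the interchange of the loop insertion $\delta_{\widetilde x_0}^{i_0}$ with the residue at the \emph{moving} point $s$: one should present $\Res_{x=s}$ as a fixed small contour integral enclosing $s$ and no other root of $S-S_\mu$, and argue that under an infinitesimal deformation this contour may be held fixed (the roots vary continuously and stay separated generically), so that $\delta_{\widetilde x_0}^{i_0}$ passes under the integral. Closely related, and already flagged in the appendix, is the assumption that the loop insertion operator commutes with the infinite topological ($\varepsilon$-)sums, which is what legitimizes both $B=\delta_{\widetilde x_0}^{i_0}Y$ at leading order and the manipulation $\Phi=\delta_{\widetilde x_0}^{i_0}\ln(\dots)$ at the level of the genus-zero data; I would state this explicitly as the hypothesis under which the argument runs. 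The remaining points — simplicity of the zero $s$, nonvanishing of $D_{\mu\pm1}$ there, and single-valuedness of $G$ near $s$ (no logarithmic period, guaranteed because $B_{\mu+1}-B_\mu=Q\partial_x\Phi$ carries no simple pole at $s$) — are the genericity inputs that make the pole bookkeeping above unambiguous.
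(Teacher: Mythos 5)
Your proposal is correct and follows essentially the same route as the paper's own proof: use the loop-insertion identity to rewrite $G(\overset{i_0}{x_0},\overset{\mu+1}{x})-G(\overset{i_0}{x_0},\overset{\mu}{x})$ as a variation $\delta_{\widetilde x_0}^{i_0}$ of $\ln\bigl(D_{\mu-1}D_{\mu+1}/D_\mu^2\bigr)$, observe that $R\,\delta\ln R=\delta R$ so the residue becomes $\delta_{\widetilde x_0}^{i_0}\bigl(\Res_{x=s}R\bigr)$, and conclude by the differential Hirota identity exhibiting $R$ as a total $x$-derivative. Your version is in fact slightly more careful than the paper's (you track the factor of $Q$, the integration constant $c(\overset{i_0}{x_0})$ inherent in the definition of $G$, and the interchange of $\delta_{\widetilde x_0}^{i_0}$ with the residue), but these are refinements of the same argument rather than a different one.
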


\begin{proof}{
Recall that the loop insertion operator is such that $\delta_{\widetilde x_1}^{i_1}\cdot Y(\overset{i_2}{ x_2})=B(\overset{i_1}{ x_1},\overset{i_2}{ x_2})$. Taking a primitive on both sides of this equality yields $\delta_{\widetilde x_1}^{i_1}\cdot \ln \frac{D_{\mu-1}(\widetilde x_2)}{D_{\mu}(\widetilde x_2)}=G(\overset{i_1}{x_1},\overset{i_2}{x_2})$ and in turn

\begin{eqnarray}
G(\overset{i_0}{x_0},\overset{\mu+1}{x})-G(\overset{i_0}{x_0},\overset{\mu}{x}) = \delta_{\widetilde x_0}^{i_0} \ln \frac{D_{\mu-1}(\widetilde x)D_{\mu+1}(\widetilde x)}{D_{\mu}(\widetilde x)^2}
\end{eqnarray}

Replacing this into the expression appearing in the statement of the theorem implies

\begin{eqnarray}
\bold{BE}_{\mu,s}(\overset{i_0}{x_0})&=&\underset{x=s}{\Res} \left(  \frac{D_{\mu-1}(\widetilde x)D_{\mu+1}(\widetilde x)}{D_{\mu}(\widetilde x)^2} \delta_{\widetilde x_0}^{i_0} \ln \frac{D_{\mu-1}(\widetilde x)D_{\mu+1}(\widetilde x)}{D_{\mu}(\widetilde x)^2} \right) \\
&=&\delta_{\widetilde x_0}^{i_0}\left(\underset{x=s}{\Res}  \frac{D_{\mu-1}(\widetilde x)D_{\mu+1}(\widetilde x)}{D_{\mu}(\widetilde x)^2} \right)
\end{eqnarray}
which is indeed equal to zero by the differential Hirota identities since it allows to realize the term whose residue has to be taken as a total derivative.}
\end{proof}

\end{document}